\definecolor{mygray}{gray}{.8}
\definecolor{mypink}{rgb}{.99,.91,.95}
\definecolor{mycyan}{cmyk}{.3,0,0,0}
\let\sss= \scriptscriptstyle  % change the font size of subscript/superscript
\newtheorem{theorem}{Theorem}
\newtheorem{remark}{Remark}
\begin{document}

\title{Selling Renewable Utilization  Service to  Consumers via Cloud Energy Storage}

\author{Yu~Yang,~\IEEEmembership{Student Member,~IEEE,}
	Utkarsha Agwan,~\IEEEmembership{Student Member,~IEEE,}\\
	Guoqiang~Hu,~\IEEEmembership{Senior Member,~IEEE,}
	and~Costas~J.~Spanos,~\IEEEmembership{Fellow,~IEEE}% <-this % stops a space
	\thanks{This  work  was  supported  by  the  Republic  of  Singapore’s  National  Research  Foundation  through  a  grant  to  the  Berkeley  Education  Alliance  for  Research  in  Singapore
		(BEARS)  for  the  Singapore-Berkeley  Building  Efficiency  and  Sustainability  in  the
		Tropics  (SinBerBEST)  Program.  BEARS  has  been  established  by  the  University  of  California,  Berkeley  as  a  center  for  intellectual  excellence  in  research  and  education  in
		Singapore.}% <-this % stops a space
	\thanks{Yu Yang is with SinBerBEST, Berkeley Education 	Alliance for Research in Singapore, Singapore 138602 e-mail: (yu.yang@bears-berkeley.sg).}
	\thanks{Guoqiang Hu is with the School 	of Electrical and Electronic Engineering, Nanyang Technological University,
		Singapore, 639798 e-mail: (gqhu@ntu.edu.sg).}
	\thanks{Utkarsha Agwan and Costas J. Spanos are with the Department of Electrical Engineering and 	Computer Sciences, University of California, Berkeley, CA, 94720 USA email: (\{uagwan, spanos\}@berkeley.edu).}% <-this % stops a space
}

% The paper headers
%\markboth{Journal of \LaTeX\ Class Files,~Vol.~11, No.~4, December~2012}%
%{Shell \MakeLowercase{\textit{et al.}}: Bare Demo of IEEEtran.cls for Journals}

% make the title area
\maketitle
\begin{abstract}
This paper proposes a cloud energy storage (CES) model for enabling greater utilization of local renewable generation by building consumers (BCs). As opposed to  most existing  ES sharing models that the energy storage operator (ESO) leases energy or power capacity to its customers, our CES model suggests the ESO to sell  renewable utilization service (RUS) for higher profitability.  Particularly, the customers request CES service in  the amounts of total local renewable generation they want to shift  to supply their demand  over the contracted time period. 
We propose a quadratic price model for the ESO charging  its customers by the requested RUS,  and formulate their interactions as a Stackelberg game which admits  an \emph{equilibrium}. 
We prove the CES model outperforms individual ES (IES) model in \emph{social welfare}. 
%Besides,  we study the performance of the CES model compared with the IES model and an existing ES sharing model (referred to VES model) via case studies.  We demonstrate the CES model can provide 2-4 times profit to the ESO than the VES model.  
We demonstrate the CES model can provide 2-4 times profit to the ESO and bring higher economic benefits (i.e., cost reduction) to its customers over the existing ES sharing models. 
Moreover, we show the CES model can achieve near \emph{social optima} and high ES efficiency (i.e., ES utilization) which are not provided by the other ES models.  Particularly, this paper can work as an example how market design and sharing economy can shape the efficiency of energy systems. 
\end{abstract}

% Note that keywords are not normally used for peerreview papers.
\begin{IEEEkeywords}
Cloud energy storage, ES sharing,  renewable integration, Stackelberg game,  ESO profitability, ES efficiency.
\end{IEEEkeywords}

% For peer review papers, you can put extra information on the cover
% page as needed:
% \ifCLASSOPTIONpeerreview
% \begin{center} \bfseries EDICS Category: 3-BBND \end{center}
% \fi
%
% For peerreview papers, this IEEEtran command inserts a page break and
% creates the second title. It will be ignored for other modes.
\IEEEpeerreviewmaketitle

\section{Introduction}
\IEEEPARstart{R}{enewable} energy  is crucial for transitioning to a sustainable and low-carbon energy system  \cite{weitemeyer2015integration}.  The continuing drop in renewable cost and increase in policy incentives and mandatory targets have boosted  renewable installation (e.g., wind and solar power, etc.) around the globe  \cite{GlobalRenewableEnergy, GlobalWindSolar}.  Whereas blending the volatile and non-dispatchable  renewable supply into consumer demand at scale  still remains to be resolved  due to their asynchronous pace. 
Energy storage (ES)  deems one effective solution  to offset  the temporal imbalance  \cite{stenclik2017maintaining, goebel2015bringing, arbabzadeh2019role}. However, ES is still expensive and thus  paring renewable generators with isolated ES will make the shaped renewable supply cost-prohibitive and raise the levelized cost above high-carbon alternatives  \cite{eia2016levelized, ziegler2019storage}.
At the same time, despite being capital-intensive, individual or isolated ES installations for renewable integration are usually under-utilized due to the volatility of renewable generation. This poses the possibility of designing appropriate ES business models to enable cost-effective renewable integration.

 \emph{Sharing economy}  has manifested  in transportation and housing systems \cite{barron2018sharing, agatz2012optimization},  suggesting its  potential to bring in new technologies into energy systems such as ES \cite{lombardi2017sharing}.  
 For example, multiple consumers can share their under-utilized private ES with each other, or jointly invest in a central  ES \cite{chakraborty2018sharing}. Besides, third-party providers can invest in ES  to provide storage service to their customers \cite{zhao2019virtual}. The underlying idea is to increase ES utilization and  \emph{marginal value} \footnote{The economic gain of per-unit ES investment.},  thus lowering the economic barriers for ES deployment.
Different sharing paradigms can enable different ES business models and applications. Third-party based ES sharing models run by an energy storage operator (ESO) are foreseeable  in future energy systems from the profitability and flexibility of providing  service at scale.  Moreover, ES consumers can overcome the high ES capital cost barriers and  embrace greater local renewable utilization.  Though the potential economic benefits of ES sharing are clear, achieving the objective is a non-trivial task, and depends on a well-designed ES business model to address the following challenges.
\begin{itemize}

\item  The ESO and consumers are both self-interested,  thus the ES sharing model must admit an \emph{equilibrium} that ensures both ESO profitability and  consumer incentives (i.e., higher economic benefits over individual ES installations). 

\item  As ES is capital-intensive,   the operation of the shared ES should be well coordinated so as to improve the ES utilization and economic benefits for the ESO.   However, the ES  consumers  are separate  and competitive. 

\item  Due to the diversity of demand and renewable generation,  the consumers generally require to charge and discharge the ES in heterogeneous multi-step patterns, which makes it difficult for the ESO  to uniformly price the ES service. 
\end{itemize}

Mainly due to the above challenges, most of the existing third-party based ES sharing models fail in securing  desirable ESO  profit or consumer incentives (see \cite{liu2017decision, liu2019research, zhao2017pricing, zhao2019virtual, he2019optimal}). Typically,  \cite{liu2017decision} proposed a cloud energy storage (CES) model for customers to harvest grid price arbitrage. 
However, the numeric results report  the maximal ESO profit rate  is only 4.6\%   (relative to the ES capital cost) even all consumers commit to  paying  a service fee equal to  individual ES installation (in such setting,  no incentive is actually provided to the consumers).
\cite{he2019optimal} also proposed a CES model and addressed the specific problem of ES service pricing. In their model, the ESO charges the consumers by their occupied energy capacity (in \si{\kilo\watt\hour}) and power capacity (in \si{\kilo\watt}) with a linear price model. 
However, we can infer from the numeric results that the ESO profit is  mainly   from  the economics of scale rather than increasing ES utilization through sharing (the ESO purchases ES  at scale with a price 40\% lower than individual purchase).
Besides, \cite{zhao2017pricing, zhao2019virtual} proposed a novel third-party based ES sharing model characterized by a two-stage structure: the ESO computes the optimal price to maximize its profit and the customers determine the optimal energy capacity to rent.  They also used  a  similar linear price model by the energy capacity (in \si{\kilo\watt\hour}).  
Notably, a systematic algorithm was proposed to search for the \emph{equilibrium} price by exploring the problem structures. However, the algorithm seems cumbersome for application. 
 Moreover, the profitability of ESO seems sensitive to scenarios as it depends on the  operation of the  self-interested and independent consumers  over   their contracted energy capacity.
% it requires the self-interested and independent consumers to charge and discharge  the ES  in somehow complementary manner.  

 In a nutshell, most third-party based ES sharing models can not ensure ESO profitability and consumer incentives.  This is mainly caused by the inappropriate (linear) price model used for bargaining.
The objective to benefit both the ESO and consumers
%as the ES resource is capital-intensive, 
%the objective to ensure ESO profitability  while securing consumer incentives 
 %not less than individual installation 
 can be achieved if and only if the ES utilization is enhanced  through coordination and sharing.
% Specifically,  as ES resource  is capital-intensive, the ESO profitable while ensuring consumer incentives  not less than individual installation only if  ES utilization is enhanced  through coordination.
  Whereas a linear price model based on  energy or power capacity can not direct the coordination of  customers over their   multi-step  charging and discharging of the shared ES. In such setting,  the ESO and consumers  are likely to  fail in bargaining and sharing the potential economic benefits of ES.     
More importantly, most of such models do not retain \emph{social welfare}, taking the \emph{sharing economy} away  from its pathway of  improving  resource efficiency  and creating  a sustainable future  \cite{mi2019sharing}.

\subsection{Contributions}

Motivated by the literature, especially~\cite{zhao2017pricing, zhao2019virtual},  this paper studies a third-party based  cloud energy storage (CES) model.  We seek to  achieve   ESO profitability  and  consumer incentives while retaining  \emph{social welfare}.   
In contrast to most existing works where the ESO profits by leasing energy or power capacity, our CES model suggests the ESO to sell renewable utilization service (RUS) to its consumers. Particularly, the consumers request CES service in the amounts of total local  renewable generation they want to shift to supply their demand  over the contracted time period. We propose a quadratic price model for the ESO  charging  its consumers by their requested RUS.  
The underlying motivation of a quadratic price model is to capture the increasing marginal ES capital investment undertaken by the ESO to achieve the increasing RUS requested by the consumers.
%charges its customers  by the occupied energy or power capacity,  we propose a price model  by the required renewable utilization service (RUS), i.e., the total amount of  net renewable generation stored  to balance their demand over the contracted time period.  In other word, the ESO seeks to gain profit by providing ES service 
%to the customers for supporting their local renewable integration. 
%Particularly,  our CES model follows the similar idea of cloud computing where the consumers mostly care about whether their computing tasks are accomplished on time regardless of how the computing resources are dynamically allocated by the operator.  Analogously,  in the CES model,  we assume the  consumers  only  care about whether their RUS are satisfied instead of  how the ES resources are dynamically distributed by the ESO.  In other word, the ESO  is authorized to dynamically allocate the ES resources  to accomplish the RUS of its customers.  
%This sharing paradigm allows the ESO to coordinate the charging and discharging behaviors of its consumers thus increasing the utilization of ES. 
Our CES model  adopts a general bargaining framework where  the ESO is allowed  to \emph{accept} or \emph{reject} the required RUS and the consumers exclusively determine their requested RUS to maximize their economic benefits.  
We make the following main contributions in this paper: 

\begin{itemize}
	\item[\emph{(C1)}] We propose a CES model  by suggesting the ESO to gain profit by
	selling renewable utilization service (RUS) to consumers.  The CES model  can secure higher consumer incentives
	over  individual ES (IES) installations.
	
		\item[\emph{(C2)}]   We formulate the problem as a Stackelberg game with the ESO as the \emph{leader} and the consumers as the \emph{followers}. We prove the CES model admits an \emph{equilibrium}  that is accessible by solving a mixed-integer linear programming (MILP) problem.

		\item[\emph{(C3)}]  We demonstrate  the desirable ESO profitability and  consumer incentives through numeric studies.  Particularly, we show the CES model is superior to the existing ES sharing models~\cite{zhao2017pricing, zhao2019virtual} both in ESO profit and consumer incentives. 
					
	    \item[\emph{(C4)}]   We  prove the CES model outperforms IES model in \emph{social welfare} theoretically. Moreover, we demonstrate the CES model can achieve near-optimal \emph{ social welfare} via numeric studies.

\end{itemize}

The remainder of this paper is structured as follows. 
In Section II, we survey the existing ES sharing models.
In Section III, we introduce the CES model and the Stackelberg game formulation. 
In Section IV, we study the \emph{equilibrium} of  CES model and present  the main theoretical results. 
In Section V, we study the economic benefits of CES model via numeric studies.
In Section VI, we conclude this paper.

\section{Literature}

Based on the sharing paradigms,  the existing ES sharing models can be divided into three groups: 
\emph{i)} peer-to-peer ES sharing, \emph{ii)} community ES sharing,  and \emph{iii)} third-party based ES sharing.  They mainly differ  in the ownership of  ES resources and  interactions among the  participants. 

Peer-to-peer ES sharing  refers to multiple consumers  sharing their under-utilized private ES with each other (see \cite{zhong2019online, tushar2016energy} for examples).  
Community ES sharing characterizes multiple users  cooperatively investing and using a central ES (see \cite{chakraborty2018sharing, yao2016stochastic, yao2017privacy, zhu2019credit},  and the references therein).   Clearly, these two kinds of ES sharing models differ in the ownership of ES resources. They also differ in the communication and interaction among the consumers. 
For peer-to-peer ES sharing,  the amount of shared ES capacity as well as the price are generally bargained exclusively by the end-users,  whereas in community ES sharing models the ES resources are  generally managed by a central coordinator. 
In the literature, most of the existing peer-to-peer ES sharing models focused on sharing under-utilized physical capacity (see \cite{tushar2016energy}), whereas community ES sharing models mostly studied the cooperation of multiple customers regarding the ES operation (see \cite{chakraborty2018sharing}).  
Therefore, the latter generally  favors the \emph{social welfare} but requires the participants to form a federation beforehand and adhere to centralized coordination. 

%Therefore, it's easy to understand that  the former generally shares physical ES capacity whereas the latter allows to coordinate the charging and discharging behaviors of different customers thus improving the ES utilization. 
% to ensure the \emph{fair} cost or payoff allocation which affects the \emph{stability} of the coalition. This is generally a challenging task due to the heterogeneous users.  
%Basically, \emph{peer-to-peer} ES sharing  only share physical ES capacity and does not coordinate the charging and discharging behaviors  of different users to elevate the  \emph{value} of ES;  whereas \emph{community} ES sharing can maximize the \emph{value} of ES through full cooperation but requires to form fixed federation beforehand. 

%Moreover, the \emph{fair} ES resource and payoff allocation among the heterogeneous participants is a challenging task to ensure the \emph{stable} coalition.  Most of the existing works either focused on the systematic economics   \cite{bayram2015stochastic}  or using some heuristic rules to manage the ES resource \cite{zhu2019credit, yao2016stochastic, yao2017privacy}.

The last main  category is  third-party based ES sharing models where
an ESO  invests in ES resources  to gain profit by either leasing ES capacity (see \cite{zhao2017pricing, zhao2019virtual} for examples) or providing storage service to its consumers (see \cite{liu2017decision, liu2019research}  for examples). Obviously, this kind of ES sharing models involves two kinds of agents  with asymmetric roles (i.e., ESO and consumers), which significantly differs  from the other two categories with only peer participants. Therefore, this problem generally corresponds to \emph{market mechanism} design  to allocate  ES  economic benefits among the ESO and its consumers.

Different practices call for different sharing paradigms as discussed in \cite{vespermann2020access}.
%In a nutshell,  these three kinds of  ES sharing models are suitable for different situations.  
For example, peer-to-peer sharing  can benefit  customers with private or isolated ES resources, while  community ES models allow  consumers to harness the maximal ES economic benefits  through full cooperation  from the planning stage of purchase.  Third-party based ES sharing models  can unload the capital cost barrier from the consumers and accelerate the ES deployment.  From the perspective of market structures,  peer-to-peer and third-party based ES sharing models can create a flexible market to allow the consumers to dynamically join or drop out, while community sharing models require a more-or-less fixed membership.   
Despite the flexibility and prospects, ESO profitability and desirable consumer incentives for third-party based ES sharing models are essential for exercising~\cite{dvorkin2016ensuring},  which  hasn’t been well addressed  and thus motivates this work.

\vspace{-3mm}
\section{The Problem}
In this section, we  first introduce our CES model  settings and then present the Stackelberg game formulation.

\subsection{CES model}
 
 \begin{figure}[h]
 	\centering
 	\includegraphics[width=2.8 in ]{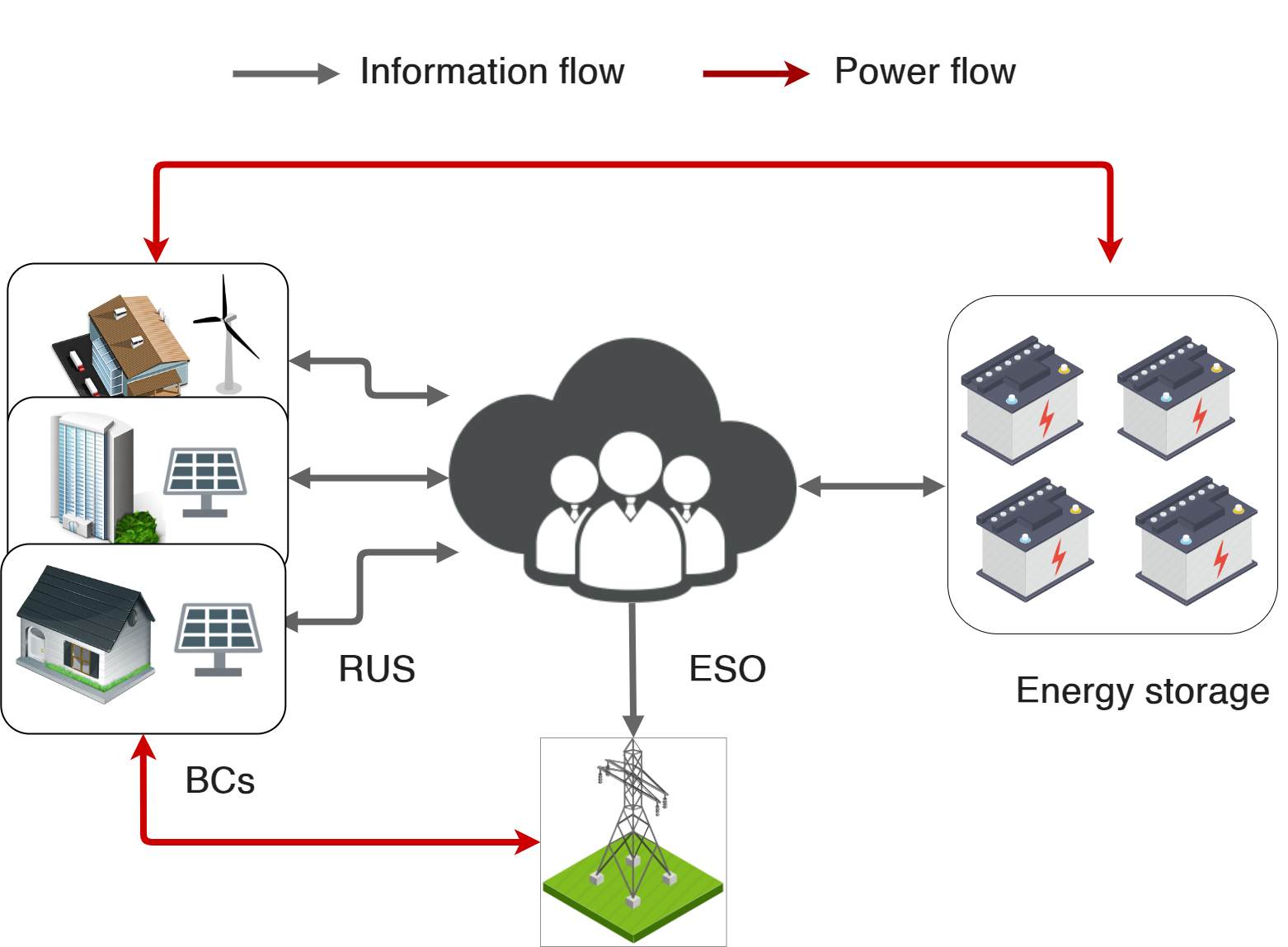}\\
 	\caption{The platform of CES model.}\label{Fig:CEMS}
 \end{figure}

The platform of a CES model in which the ESO provides ES service to its consumers is illustrated in Fig. \ref{Fig:CEMS}.  The platform is composed of three main blocks: \emph{building consumers} (BCs) with local renewable generation (e.g., solar and wind power),   the \emph{ESO} and \emph{ES facilities} owned by the ESO.  
The BCs can use the local renewable generation or procure electricity from the grid to satisfy their demand. Conversely,  their surplus local renewable generation can be sold back to the grid or stored if with ES.  We consider the practice that the BCs can choose to install private or isolated ES or  subscribe to the ESO for renewable utilization service (RUS): \emph{the  total amount of  local renewable generation $r_i$ (in \si{\kilo\watt\hour}) they want to shift  to supply their non-elastic demand at some other time by the CES}.  
In other word, the BCs can choose to store their  surplus renewable generation in the CES  and then discharge it  to supply their future demand. In such settings,  the ESO can seek profit by  investing in \emph{ES facilities} and providing  RUS  (i.e., ES service) to the BCs at scale .   

\emph{Illustration of RUS}: Fig. \ref{fig:RUT_example} gives an illustrative example of  RUS $r_i$ for BC $i$.  We use the curve to indicate a net generation profile  (renewable generation minus demand)  of BC $i$.  
  Intuitively,  the instant net renewable generation (above the axis) can not supply the BC's future demand (below the axis) without storage. 
  However, by subscribing to the ESO,  the surplus renewable generation indicated by $S^{+}_1$ and $S^{-}_2$ can be stored in the CES  and then used to supply the subsequent demand indicated by $S^{-}_1$ and $S^{-}_2$.  %Clearly, we have $r_i = S^{+}_1+S^{+}_2=S^{-}_1+S^{-}_2$ and $S^{+}_1 \geq S^{-}_1$. 
 For this example, we have the RUS $r_i = S^{+}_1+S^{+}_2=S^{-}_1+S^{-}_2$ for BC $i$, which characterizes  the total amounts of renewable generation shifted by the CES to supply BC $i$'s demand over the contracted time period (e.g., one day).  
\begin{figure}[h]
	\centering
	\includegraphics[width=2.6in, height= 1.2 in]{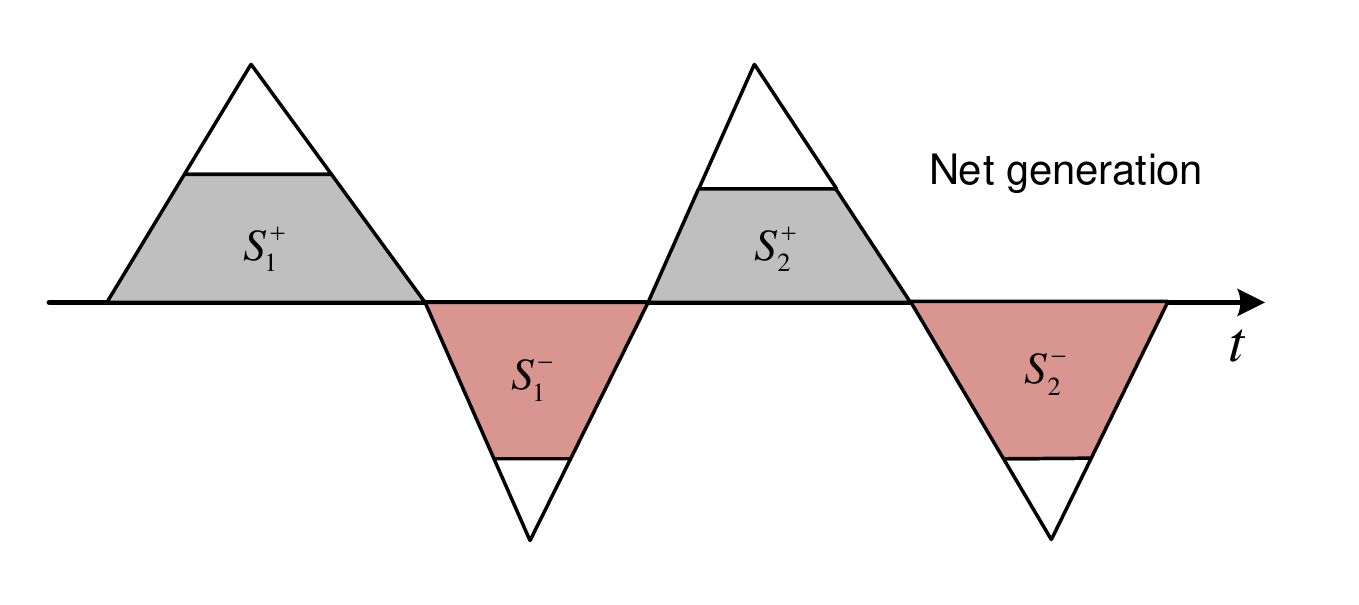}\\
	\vspace{-3mm}
	\caption{An illustrative example of RUS.}\label{fig:RUT_example}
\end{figure}

%Our CES model adopts a general socio-economic communication framework. % as indicated in Fig.
 The ESO decides the RUS price and the optimal ES (energy and power) capacity to invest in with the objective to maximize its profit. Particularly, the ESO should account for  the  consumer incentives, otherwise the BCs may  not join the business. 
  %consider installing private ES instead of participating the sharing market if the former is better off for them. 
   Based on the price, the BCs  determine their optimal RUS to request so as to minimize their total operation cost (i.e., electricity bill plus the RUS fee). For the ESO's sake, we allow  the ESO  to \emph{accept} or \emph{reject} the requested RUS  in the communication. 
% \begin{figure}[h]
%	\centering
%	\includegraphics[width=2.2 in, height= 1.4 in]{./figure/RUT_service.png}\\
%	\caption{The RUS subscribed to ESO.}\label{fig:RUT_service}
%\end{figure}
%to the customers for supporting their local renewable integration. 
Particularly,  our CES model adopts the similar idea of cloud computing where the consumers mostly care about accomplishing their computing tasks regardless of how the computing resources are dynamically distributed by the operator.  Analogously,  the CES model envisions the  consumers  only  care about the satisfaction of their requested RUS  instead of the ES operation behind.  In other words,
% the ESO  is authorized to dynamically allocate the ES resources  to accomplish the RUS of its customers.  
%This sharing paradigm allows the ESO to coordinate the charging and discharging behaviors of its consumers thus increasing the utilization of ES. 
%Particularly, Following the idea of cloud computing, we assume the BCs only care about whether their RUS  are  achieved %regardless of how the ES resource is scheduled by the ESO over the contracted time period (i.e., one day).  More specifically, 
we assume the ESO will compute  the  optimal charging and discharging policies for the BCs provided with their predicted net generation profiles.

\vspace{-3 mm}
\subsection{Main assumptions}
We make the following main assumptions in this paper: 
\begin{itemize}
	\item[\emph{(A1)}] We consider the BCs buying electricity from  the grid at a fixed price that is much higher than that of selling back. 
%	\item[\emph{(A1)}] We consider a fixed grid price and the purchase price is much greater than the selling price.
    \item[\emph{(A2})] We do not consider energy sharing  among the BCs.
    \item [\emph{(A3)}] We focus on inelastic demand  and do not consider demand response of the BCs.
    \item[\emph{(A4)}] We study an ES sharing market with single ESO. 
\end{itemize}

%\begin{remark}
%{\color{red}{\emph{(A2}) -\emph{(A4)} are imposed to simplify the discussions. }}
%\end{remark}

\vspace{-5mm}
\subsection{Stackelberg game formulation}
We consider a group of  BCs $\mathcal{ N}: =\{1, 2, \cdots, N\}$ subscribing to the ESO for RUS over the  contract time period 
$\mathcal{T}: =\{ 1, 2, \cdots, T\}$.

Since both the ESO and  BCs are cost-aware,  the problems for them are both trade-offs.  For the BCs, if more local renewable generation is shifted to support demand, their  electricity bills will decrease (we assume zero marginal cost for renewable generation).  However, they would have to pay higher ES service fees to the ESO.  Conversely,  the ESO  can gain higher revenue by accepting more  RUS  but would have to invest  more  \emph{ES facilities}. 
This interaction between the ESO and  BCs  can be well captured by a Stackelberg game with the ESO as the \emph{leader} and the BCs as \emph{followers}.  
The ESO proposes an RUS price and then the BCs calculate their optimal RUS requests. 
Before we present the Stackelberg game formulation, we first address the RUS price model. 

  \emph{RUS price model}: Clearly, the RUS price model determines  the existence of an \emph{equilibrium} as well as  the \emph{efficiency} of ES resource (i.e., \emph{social welfare}).    In this paper, we adopt  a  quadratic price model for the ESO charging the BCs by their requested RUS: 
  \begin{equation} \label{eq:price_model}
  \begin{split}
   Q_i(r_i) = q_i ({r_i})^2
  \end{split} 
  \end{equation}
  where $q_i$ denotes the RUS  service price coefficient  and $r_i$ indicates the RUS of BC $i$. 
  $Q_i(r_i)$ captures the RUS payment of  BC $i$ for the requested RUS $r_i$. 

  There are two arguments for such  price model. 
   \emph{First},  it can  capture the increasing marginal ES investment made by the ESO to achieve the increasing  RUS for  the BCs. That is to say, 
   %  with the increase of RUS,  
   the ESO generally needs to spare increasing marginal ES resource  to satisfy the increasing  RUS. 
  This can be further interpreted that for a specific BC,  the  scheduling flexibility of its net generation profile decreases  with the increasing RUS,  thus more ES resource is required to achieve per-unit RUS.   
    This  characteristic can be captured by the quadratic price model \eqref{eq:price_model} as we have  $Q^{'}_i(r_i)= 2q_i r_i$, i.e., the marginal  ES service cost  is increasing w.r.t. $r_i$.
  \emph{Second}, we resort to a  discriminatory  price coefficient $q_i$ to account for the heterogeneous  net generation patterns of the BCs. 
  More specifically,  for two BCs requiring the same RUS, the ES resource occupied are usually different due to their different net generation patterns.  In the subsequent, we give the Stackelberg game for the CES model with the BCs as the \emph{followers} and the ESO as the \emph{leader}.

\textbf{BC $i$}: 
We consider a fixed grid price for the BCs and the purchase price $c^{\text{+}}_g$ from the grid is largely greater that of  the selling price $c^{\text{-}}_g$, i.e., $c^{\text{+}}_g  \gg c^{\text{-}}_g$.   
 This means  it is more sensible for the BCs to use local renewable generation for fulfilling their demand  instead of selling back to the grid.  Since the BCs are free to install private ES or join the CES model,  and they only care about the relative economic benefits of different options,  we use the BCs' cost  without ES (w/o ES) as  baseline and study the cost reductions for different ES models in this paper.  As for BC $i$ joining the CES model  and requesting RUS $r_i$, the cost reduction  comprises of two blocks:
  \emph{i)} the reduced electricity bill due to RUS  $\big( c^{\text{+}}_g-c^{\text{-}}_g \big) r_i $,  and \emph{ii)} the RUS payment to  the ESO  $Q_i(r_i) = q_i({r_i})^2$.  
Clearly,   BC $i$  has to make a trade-off between the  electricity bill reduction and  the  RUS  payment. 
 We use a collection of representative scenarios $\omega \in \Omega$ to capture the volatility of renewable generation for each BC.  Therefore, the problem for BC $i$  to decide  the optimal  RUS $r_i$   to minimize its total cost can be described as 
\begin{subequations}
	\begin{alignat}{4}
& J^{*}_{\sss \text{BC},i} = \min   -\big(c^{\text{+}}_g -c^{\text{-}}_g \big) r_i  + q_i({r_i})^2  \notag\\
&\label{eq:BC} \text{subject~to:} \tag{${\cal P}_{\sss \text{BC}, i}$}\\
&\label{eq:2a} \quad \quad r^{\min}_i \leq r_i \leq  r^{\max}_i.  \\
%&\label{eq:2b}\quad \quad J^{*}_{\sss \text{BC}, i} \leq J^{\sss \text{Ind}}_{\sss \text{BC}, i}.  \\
& \quad \quad \text{var:}~r_i.  \notag
	\end{alignat}
\end{subequations}
where $r^{\min}_i$ and $r^{\max}_i$ represent the minimum and maximum RUS  that can be requested by BC $i$.
$r^{\min}_i$  and  $r^{\max}_i$ are generally determined by the net generation profile of BC $i$.  For example, we usually have $r^{\min}_i=0$ and $r^{\max}_i$ as the amounts of  accumulated surplus renewable generation over the contracted time period $\mathcal{T}$. Essentially, each BC can not ask to shift more renewable generation than it produces. Therefore,  we can define $r^{\max}_i$ as 
\[r^{\max}_i =  \sum_{T \in \mathcal{T}}  \sum_{\omega \in \Omega} p_\omega [p^{\text{r}, \omega}_{i,t}-p^{\text{d}, \omega}_{i,t}]_{+} \]
where  $p^{\text{r}, \omega}_{i,t}$  and $p^{\text{d}, \omega}_{i,t}$  represent the renewable generation and inelastic demand of BC $i$ at time $t$, and $[x]_{+} = \max(x, 0)$.  

It is important to note that the BCs will only  participate in  the CES  if higher economic benefits are provided over the IES model. Specifically, define $J^{\sss \text{Ind}}_{\sss \text{BC}, i}$ as  the minimal total cost of  BC $i$ with IES model, which is available from \emph{off-line} computation, the necessary condition for BC $i$'s participation  can be captured by

%In problem \eqref{eq:BC},  we impose constraint \eqref{eq:2b} to ensure consumer incentives not less than IES model, which is necessary as to ensure participation of BCs in the CES model. 

\begin{equation}
J^{*}_{\sss \text{BC}, i} \leq J^{\sss \text{Ind}}_{\sss \text{BC}, i}
\end{equation}
%\red{Does it make sense to include this as a constraint in this problem? Or is this a post-optimization test?}

\textbf{ESO}: The ESO seeks to gain profit by selling RUS to the BCs. To explore the  maximum economic benefits, we study the problem from the planning stage  by  considering the ES sizing.   In such  setting, the profit of ESO also comprises of two parts: \emph{i)} the ES capital investment, and \emph{ii)} the RUS payment  charged from the BCs. We denote the ES  energy and power capacity as $E$ (\si{\kilo\watt\hour}) and $P$ (\si{\kilo\watt}).
Considering the computation burden, we project the ES planning problem on a daily basis and use an amortized price model
 for energy capacity  $C_{\sss E}$ (s\$\si{\per{\kilo\watt\hour}}) and power capacity $C_{\sss P}$  (s\$\si{\per{\kilo\watt}}) for planning, which are obtained according to the projected 
 ES price 
 $100$\EUR\si{\per{\kilo\watt\hour}} and $300$\EUR\si{\per{\kilo\watt}} by 2025 \cite{pandvzic2018optimal}.   The ESO will determine the RUS price  and the ES size based on the requested RUS to maximize its profit. 
Besides, the ESO is allowed to \emph{accept} or \emph{reject} the requested RUS for profitability.  
Particularly, the ESO is authorized to determine the optimal charging and discharging policies for the BCs. 
Therefore, the problem for the ESO can be formulated  as 
\begin{subequations}
	\begin{alignat}{4}
 &J^{*}_{\sss \text{ESO} } =  \max~ \sum_{i \in \cal N}  q_i   ({r_i})^2 - ( C_{\sss E} E + C_{\sss P} P )  \notag\\
 &\label{eq:ESO}  \text{subject~to:}  ~~\eqref{eq:ChargingDischarging}, \eqref{eq:grid_trading}, \eqref{eq:ES_limits},   \eqref{eq:RUT_service}.  \tag{${\cal P}_{\sss \text{ESO}}$}\\
&  \text{var:} ~~E, P,   q_i,   y^{\omega}_i, ~\forall i \in \mathcal{N}, \omega \in \Omega. \notag
	\end{alignat}
\end{subequations}
\begin{itemize}

\item Constraints \eqref{eq:ChargingDischarging}  correspond to the ES operation policies for the BCs  defined as below. 
Constraints \eqref{eq:3a}-\eqref{eq:3b} impose the charging and discharging power limits $p^{\text{ch}, \max}$ and $p^{\text{dis}, \max}$ on the BCs. 
Constraint \eqref{eq:3c} tracks the stored energy for each BC with $\eta^{\text{ch}}, \eta^{\text{dis}} < 1$ denoting the ES roundtrip efficiency.  Particularly, the stored energy for different BCs are metered separately and we do not consider energy trading in this paper.
Constraint \eqref{eq:3d} prevents the BCs from over discharging their stored energy in the CES. 
The binary variables $x^{\text{ch}, \omega}_{i, t}, x^{\text{dis}, \omega}_{i, t}$ are introduced to impose the physical restrictions of simultaneous  charging and discharging for each BC. 
Constraint \eqref{eq:3f} indicates the BCs can only charge the CES  with  renewable energy  not  the procured energy from the grid.  This is because we  design to use the CES to shift  renewable generation. 
\begin{subequations} \label{eq:ChargingDischarging}
	\begin{alignat}{4}
& \label{eq:3a} 0 \leq p^{\text{ch}, \omega}_{i, t}  \leq x^{\text{ch}, \omega}_{i, t}  p^{\text{ch}, \max},  \\
&\label{eq:3b}  0 \leq  p^{\text{dis}, \omega}_{i, t}  \leq x^{\text{dis}, \omega}_{i, t}  p^{\text{dis}, \max},  \\
& \label{eq:3c}  e^{b, \omega}_{i, t+1}=e^{b, \omega}_{i, t}+p^{\text{ch}, \omega}_{i, t} \eta^{\text{ch}}- p^{\text{dis}, \omega}_{i, t} / \eta^{\text{dis}}, \\
&  \label{eq:3d}  e^{b, \omega}_{i, t}  \geq 0, \\
&  \label{eq:3e}  x^{\text{ch}, \omega}_{i, t} + x^{\text{dis}, \omega}_{i, t} \leq 1, x^{\text{ch}, \omega}_{i, t}, x^{\text{dis}, \omega}_{i, t} \in \{0, 1\},\\
& \label{eq:3f} p^{\text{ch}, \omega}_{i, t} \leq p^{r, \omega}_{i, t},~\forall \omega \in \Omega, \forall i \in \mathcal{ N}, t \in \mathcal{ T}.  
	\end{alignat}
\end{subequations}

\item Constraints \eqref{eq:grid_trading} model the balance of  BCs'  instantaneous supply and demand. 
%Constraints  \eqref{eq:4a} capture the instantaneous balance of supply and demand for each BC. 
At each instant $t$,  the BCs'  net  demand (i.e., $p^{\text{d}, \omega}_{i, t}>p^{\text{r}, \omega}_{i, t}$) is satisfied by the  energy discharged from the  CES   plus the grid purchase, or the net renewable generation (i.e., $p^{\text{d}, \omega}_{i, t}<p^{\text{r}, \omega}_{i, t}$)  is balanced by the energy charged into the CES and the  grid injections. In particular, we use the binary variables to indicate  $s_i$ whether the requested RUS of BC $i$ is  \emph{accept} ($s_1 =1$) or \emph{reject} ($s_i=0$)  by the ESO. Particularly, if BC $i$ (i.e., $s_i = 0$) is \emph{rejected},  the energy balance  constraint \eqref{eq:grid_trading} for BC $i$ will be relaxed.   Certainly,  to achieve energy balance,  the energy trading with the grid  should comply with the physical limits as imposed by constraint  \eqref{eq:4c}.  
\begin{subequations} \label{eq:grid_trading}
	\begin{alignat}{4}
& \label{eq:4a} p^{ \text{g+}, \omega}_{i, t}- p^{\text{g-}, \omega}_{i, t} \geq p^{\text{ch}, \omega}_{i, t}- p^{\text{dis}, \omega}_{i, t} + s_i p^{\text{d}, \omega}_{i, t}-p^{\text{r}, \omega}_{i, t},\\
&\label{eq:4b}  x^{\text{g+}, \omega}_{i, t} + x^{\text{g-}, \omega}_{i, t}  \leq 1,  ~x^{\text{g+}, \omega}_{i, t}, x^{\text{g-}, \omega}_{i, t} \in \{0, 1\}, \\
& s_i \in \{0, 1\}, \\
& \label{eq:4c} p^{ \text{g+}, \omega}_{i, t} \leq x^{\text{g+}, \omega} _{i, t} p^{g, \max}, ~ p^{\text{g-}, \omega}_{i, t} \leq x^{\text{g-}, \omega}_{i, t} p^{g, \max}, \\
& \quad \quad ~\forall \omega \in \Omega, \forall i \in \mathcal{ N}, t \in \mathcal{ T}. \notag 
	\end{alignat}
\end{subequations}

 \item Constraints  \eqref{eq:ES_limits} model the ES capacity limits. 
 Intuitively, the total amount of stored energy for all BCs can not exceed the ES energy capacity. 
 Analogously,  the net charging or discharging power must respect the  ES power capacity limits \eqref{eq:5a}.
\begin{subequations} \label{eq:ES_limits}
	\begin{align}
	\label{eq:5a} 0  \leq &\sum_{i \in \mathcal{N} }  e^{b, \omega}_{i, t}\leq E, \\
	\label{eq:5b}    -P  \leq  &\sum_{i \in \mathcal{N} } p^{ \text{ch}, \omega}_{i, t}-\sum_{i \in \mathcal{N} } p^{ \text{dis}, \omega}_{i, t}\leq P,  \\
   &  \quad \quad \quad ~\forall \omega \in \Omega, ~ t \in \mathcal{ T} \notag
	\end{align}
\end{subequations}

\item Constraint \eqref{eq:RUT_service} captures the satisfaction of  RUS:
% the required RUT service by each BC $i$ as 
\begin{equation} \label{eq:RUT_service}
\begin{split}
C_i(\bm{y}^{\omega}_i) \leq \overline{C}_i - ( c^{ \text{g+} }-c^{ \text{g-}} ) r_i, ~\forall i \in \mathcal{ N}
\end{split}
\end{equation}
where for brevity we use  $\bm{y}^{\omega}_i =[p^{\text{ch}, \omega}_{i, t} , p^{\text{dis}, \omega}_{i, t},   x^{\text{ch}, \omega}_{i, t}, x^{\text{dis}, \omega}_{i, t},  $ $e^{b, \omega}_{i, t}, p^{ \text{g+}, \omega}_{i, t},  p^{\text{g-}, \omega}_{i, t}, x^{ \text{g+}, \omega}_{i, t}, x^{\text{g-}, \omega}_{i, t}, s_i], \forall t, \omega$  to denote the concatenated decision variables associated with BC $i$. 
$\overline{C}_i$ denotes the electricity bill of BC $i$ with  w/o ES model, which can be obtained \emph{off-line}. 
  $C_i(\bm{y}^{\omega}_i)$ quantifies the incurred electricity bill of BC $i$ with the requested RUS $r_i$ defined as 
\[C_i(\bm{y}^{\omega}_i) = \sum_{\omega \in \Omega} p_\omega  \big( c^{\text{+}}_g \sum_{t\in \mathcal{T}} p^{\text{g+}, \omega}_{i,t} -c^{\text{-}}_g \sum_{t\in \mathcal{T}} p^{\text{g-}, \omega}_{i,t}\big)\]
The intuitive interpretation of constraint \eqref{eq:RUT_service} is that shifting at least $r_i$ units of local renewable generation for BC $i$  is equivalent to reducing its electricity bill by at least $( c^{ \text{+}}_g -c^{ \text{-}}_g ) r_i$. 
%The intuitive interpretation of constraint \eqref{eq:RUT_service} is that to ensure $r_i$ units of local renewable generation  shifted to supply the demand for BC $i$,  
%then its electricity bill should be reduced by  at least $( c^{ \text{+}_g }-c^{ \text{-}}_g ) r_i$. 

\end{itemize}

\section{Equilibrium and Main Results}

In this section, we study the existence  of the \emph{equilibrium} and  \emph{social welfare} of the CES model. 

\subsection{Obtain the equilibrium}
We first study the existence of \emph{equilibrium}.  For the Stackelberg game  formulation in Section III-B, we can obtain an explicit formula for the \emph{followers'} problem \eqref{eq:BC}  given the RUS price parameters $q_i$:
\begin{equation}
r_i =\Big[ \frac{c^{\text{+}}_g - c^{\text{-}}_g}{2q_i} \Big]_{r^{\min}_i}^{r^{\max}_i} 
\end{equation}
where $[x]^y_z = \max\{ \min\{ x, y\} \}$ indicates projecting $x$ into the segment $[y, z]$.

We assume the \emph{equilibrium} exists and denote it by $(q^{*}_i, r^{*}_i)$,  therefore we must have
\begin{equation} \label{eq:price}
\begin{split}
r^{*}_i  =\left\{
\begin{array}{ll}
r^{\min}_i, & q^{*}_i  > \frac{c^{\text{+}}_g - c^{\text{-}}_g}{2r^{\min}_i}  \\
\frac{c^{\text{+}}_g - c^{\text{-}}_g}{2q^{*}_i},  &   \frac{c^{\text{+}}_g - c^{\text{-}}_g}{2r^{\max}_i}  \leq q^{*}_i \leq  \frac{c^{\text{+}}_g - c^{\text{-}}_g}{2r^{\min}_i}  \\
r^{\max}_i, & q^{*}_i  < \frac{c^{\text{+}}_g - c^{\text{-}}_g}{2r^{\max}_i}  
\end{array}
\right.
\end{split}
\end{equation} 

We can derive from \eqref{eq:price} that
\begin{equation} \label{eq:price2}
 q^{*}_i =\frac{c^{\text{+}}_g - c^{\text{-}}_g}{2r^{*}_i}, ~ r^{\min}_i \leq r^{*}_i \leq r^{\max}_i
 \end{equation}

%into the problem of  the \emph{leader} (problem \eqref{eq:ESO}). 

To obtain the \emph{equilibrium}, we can substitute \eqref{eq:price2} into problem \eqref{eq:ESO} and get the blended mixed-integer linear programming (MILP) problem: 
\begin{subequations}
	\begin{alignat}{4}
	&J^{*}_{\sss \text{ESO} } =  \max~ \sum_{i \in \cal N}  \frac{c^{\text{+}}_g - c^{\text{-}}_g}{2}  {r_i}- ( C_{\sss E} E + C_{\sss P} P )  \notag\\
	&\label{eq:ESO_reformulate}  \text{subject~to:}  ~~\eqref{eq:ChargingDischarging}, \eqref{eq:grid_trading}, \eqref{eq:ES_limits},   \eqref{eq:RUT_service}.  \tag{${\cal P}$}\\
	& \quad \quad  \quad \quad  r^{\min}_i \leq r_i \leq r^{\max}_i, \notag\\
	& \label{eq:9a}  \quad \quad  \quad \quad \frac{c^{\text{+}}_g - c^{\text{-}}_g}{2}  {r_i}  + C_i(\bm{y}^{\omega}_i)  \leq s_i J^{\sss \text{Ind}}_{\sss \text{BC}~i}, \\
	&\label{eq:9b} \quad \quad  \quad \quad r_i \leq s_i r^{\max}_i, ~\forall i  \in \mathcal{ N}.  \\
	&  \text{var:} ~~E, P, r_i,   y^{\omega}_i, s_i, ~\forall i \in \mathcal{N}, \omega \in \Omega. \notag
	\end{alignat}
\end{subequations}
where we translate  constraints  \eqref{eq:RUT_service} into constraint \eqref{eq:9a} to capture consumer incentives.
Constraint \eqref{eq:9b} models the \emph{accept} and \emph{reject} mechanism for  the ESO. Particularly, we have $r_i = 0$ if BC $i$ is rejected ($s_i=0$).

Therefore, the existence of \emph{equilibrium} corresponds to the solution of problem \eqref{eq:ESO_reformulate} and we have the  main results. 
\begin{theorem}
The Stackelberg game  for the CES model
%with the ESO as the \emph{leader} \eqref{eq:ESO} and the  BCs as \emph{followers} 
admits an equilibrium. 
\end{theorem}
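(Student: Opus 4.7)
The plan is to establish existence of a Stackelberg equilibrium constructively by exhibiting it as a solution of the MILP \eqref{eq:ESO_reformulate}. The argument has three stages: (i) characterize the followers' best-response map; (ii) collapse the leader's optimization to \eqref{eq:ESO_reformulate} and show this MILP attains its maximum; (iii) lift an optimizer of \eqref{eq:ESO_reformulate} to a full price-quantity pair and verify the Stackelberg conditions.

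First, fix any price vector $(q_i)_{i\in\mathcal{N}}$ posted by the ESO. Each follower problem \eqref{eq:BC} is a strictly convex univariate quadratic over a compact interval, hence it has a unique minimizer given by the projected first-order condition already derived in the excerpt. The best-response map is therefore a well-defined single-valued function of $q_i$, and the equilibrium relation \eqref{eq:price2} is forced for any $r_i^{\ast}$ lying in the interior regime $[r_i^{\min}, r_i^{\max}]$.

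Second, substituting $q_i^{\ast} = (c_g^{+}-c_g^{-})/(2r_i^{\ast})$ into the ESO objective yields exactly problem \eqref{eq:ESO_reformulate}, where constraint \eqref{eq:RUT_service} has been rewritten as \eqref{eq:9a} to embed the incentive-compatibility condition $J^{\ast}_{\sss\text{BC},i}\le J^{\sss\text{Ind}}_{\sss\text{BC},i}$, and constraint \eqref{eq:9b} encodes the accept/reject mechanism. Feasibility of \eqref{eq:ESO_reformulate} is immediate: set $s_i=0$, $r_i=0$, $E=P=0$, and all operational variables to zero; constraints \eqref{eq:ChargingDischarging}--\eqref{eq:ES_limits} hold trivially, and \eqref{eq:9a}--\eqref{eq:9b} reduce to $0\le 0$. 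Boundedness of the objective follows from $r_i \le r_i^{\max}<\infty$ together with non-negativity of the capital-cost term. Since \eqref{eq:ESO_reformulate} is a MILP with a non-empty feasible region and a bounded objective, standard MILP theory guarantees the supremum is attained at some tuple $(E^{\ast}, P^{\ast}, r_i^{\ast}, \bm{y}_i^{\omega,\ast}, s_i^{\ast})$.

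Third, I would lift this optimizer to an equilibrium of the original game by setting $q_i^{\ast}=(c_g^{+}-c_g^{-})/(2r_i^{\ast})$ whenever $s_i^{\ast}=1$ and $r_i^{\ast}>0$, and choosing $q_i^{\ast}$ sufficiently large (above $(c_g^{+}-c_g^{-})/(2r_i^{\min})$ when $r_i^{\min}>0$, or arbitrary otherwise) for rejected BCs so that the closed-form best response returns $r_i^{\min}$, consistent with $s_i^{\ast}=0$. Verifying the Stackelberg conditions reduces to checking that the followers respond optimally to $(q_i^{\ast})$, which holds by construction, and that the leader cannot strictly improve its profit by deviating, which follows from optimality of $(r_i^{\ast},\ldots)$ in \eqref{eq:ESO_reformulate} together with the one-to-one correspondence between interior price postings and induced follower responses given by \eqref{eq:price2}. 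The main obstacle will be the clean handling of rejected BCs: the relation \eqref{eq:price2} does not pin down $q_i^{\ast}$ when $r_i^{\ast}=0$, so I must argue that a posting price exists that both induces $r_i^{\min}$ as the follower response and leaves no profitable unilateral re-engagement for the ESO; since \eqref{eq:ESO_reformulate} has already maximized over the binary $s_i$ and over $r_i\in[r_i^{\min},r_i^{\max}]$, no such deviation can exist, but a short case analysis will be required to make this airtight.
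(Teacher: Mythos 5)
Your proposal follows essentially the same route as the paper: reduce existence of the equilibrium to existence of an optimal solution of the MILP \eqref{eq:ESO_reformulate}, then argue non-emptiness of the feasible set and boundedness/compactness (via $r^{\min}_i \le r_i \le r^{\max}_i$) to conclude attainment. Your treatment is in fact more careful than the paper's --- in particular the explicit lifting of the MILP optimizer back to a price vector and the case analysis for rejected BCs, which the paper leaves implicit --- but the underlying argument is the same.
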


\begin{proof}
	
	%Clearly, the \emph{equilibrium} corresponds to  the optimal solution of problem \eqref{eq:ESO_reformulate} if it exists.  
	To prove the existence of the \emph{equilibrium}, it suffices to prove at least one   optimal solution exists for problem \eqref{eq:ESO_reformulate}, which can be illustrated by two steps.
\emph{First},  we note  problem \eqref{eq:ESO_reformulate} is well-defined and  at least one feasible solution exists. 
\emph{Second}, problem  \eqref{eq:ESO_reformulate}  is compact as we have  $r^{\min}_i \leq r_i \leq r^{\max}_i$. 
Thus,  at least one optimal solution 
$\bm{r}^{*} = [r^{*}_i] ~\forall i $ and  $\bm{q}^{*} =[ \frac{c^{\text{+}}_g - c^{\text{-}}_g }{2r^{*}_i}] ~\forall i$ exists. This induces the existence of the \emph{equilibrium} for the CES model.

\end{proof}

\subsection{Social welfare}
In this subsection, we study  the  \emph{social welfare} of the CES model and compare it with the IES and CMES model.  As discussed in the literature,  community ES (CMES) models refer to multiple consumers cooperatively invest and share a central ES  which is managed by a central coordinator to maximize the community-wise economic benefits \cite{yang2020optimal},  and thus can be used to capture the \emph{social optima} of ES sharing. As for the social performance of the CES model, we have the following main results. 
% through full cooperation among all BCs, and is  designed with the explicit objective  of maximizing community welfare.

\begin{theorem}    \label{thm:social_cost}
	The  social welfare of CES model is bounded by the  individual ES (IES) model and   community ES (CMES) model, i.e.,
	\[ \text{SC(CMES)} \leq  \text{SC(CES)} \leq  \text{SC(IES)}  \]
	where $SC(\cdot)$  indicates  the social cost of a specific ES model. 
	%For the CMES model, we assume the BCs cooperatively invest and share a central ES to minimize their total cost. 
\end{theorem}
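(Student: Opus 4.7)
The plan is to compare the three models on the common ground of the \emph{aggregate real-resource cost}, i.e., total grid purchases minus grid sales plus total ES capital investment. Transfer payments between the BCs and the ESO (the RUS fees $Q_i(r_i)$ in CES, or internal cost-sharing in CMES) cancel in the social sum, so one can write
\begin{equation*}
\text{SC}(\cdot) = \sum_{i\in\mathcal N}\sum_{\omega\in\Omega} p_\omega \bigl( c^{+}_g \sum_t p^{\text{g+},\omega}_{i,t} - c^{-}_g \sum_t p^{\text{g-},\omega}_{i,t} \bigr) + C_{\sss E} E + C_{\sss P} P
\end{equation*}
evaluated at the schedule and sizing produced by the corresponding model. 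With this common yardstick, the theorem splits into two independent inequalities that I would prove separately.

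For the lower bound $\text{SC(CMES)} \leq \text{SC(CES)}$, I would argue by \emph{feasibility inclusion}. The CMES model, by definition, minimizes the social cost over all admissible joint sizing/operation schedules of a single shared ES subject to the same physical constraints \eqref{eq:ChargingDischarging}, \eqref{eq:grid_trading}, \eqref{eq:ES_limits}. The \emph{equilibrium} of the CES model (whose existence is given by Theorem 1) produces a particular tuple $(E^{*},P^{*},\bm{y}^{\omega,*}_i)$ that satisfies exactly these constraints, plus the additional Stackelberg/participation constraints \eqref{eq:9a}–\eqref{eq:9b}. Hence the CES outcome is a feasible (but generally suboptimal) point for the CMES minimization, which immediately yields $\text{SC(CMES)} \leq \text{SC(CES)}$.

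For the upper bound $\text{SC(CES)} \leq \text{SC(IES)}$, I would lean on the two economic guarantees baked into problem \eqref{eq:ESO_reformulate}: (i) the participation inequality \eqref{eq:9a} forces every BC $i$ with $s_i^{*}=1$ to incur total cost no larger than its IES benchmark $J^{\sss \text{Ind}}_{\sss \text{BC}, i}$, and (ii) the ESO's optimal profit is nonnegative, since $r_i=0,\,E=P=0$ is always feasible and yields profit zero. Summing the BCs' costs and the ESO's (negative) profit, the RUS transfers $q_i(r_i^{*})^2$ telescope away, leaving
\begin{equation*}
\text{SC(CES)} = \sum_{i:\, s_i^{*}=1} \bigl[C_i(\bm{y}^{\omega,*}_i) + \tfrac{c^{+}_g-c^{-}_g}{2} r_i^{*}\bigr] + \sum_{i:\, s_i^{*}=0} \overline{C}_i - J^{*}_{\sss \text{ESO}}.
\end{equation*}
Applying \eqref{eq:9a} termwise for accepted BCs and the trivial fact $\overline{C}_i \geq J^{\sss \text{Ind}}_{\sss \text{BC},i}$ (IES cannot be worse than no ES), together with $J^{*}_{\sss \text{ESO}}\geq 0$, bounds the right-hand side by $\sum_i J^{\sss \text{Ind}}_{\sss \text{BC},i} = \text{SC(IES)}$.

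The main obstacle I anticipate is the careful accounting of rejected BCs ($s_i^{*}=0$): once rejected by the ESO, a BC presumably falls back to an IES installation, so its contribution to social cost is $J^{\sss \text{Ind}}_{\sss \text{BC},i}$ rather than $\overline{C}_i$. I would either state this fallback as a standing modeling convention (consistent with the paper's framing that BCs are ``free to install private ES or join the CES'') or, more conservatively, show that $J^{*}_{\sss \text{ESO}} \geq 0$ combined with the ESO's option to accept every BC at the IES-equivalent price $q_i = (c^{+}_g-c^{-}_g)/(2 r^{\sss \text{Ind}}_i)$ makes the inequality tight in the worst case. Apart from this bookkeeping point, both directions are essentially one-line arguments once the common social-cost functional is fixed.
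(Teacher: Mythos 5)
Your proof is correct and follows essentially the same route as the paper's: feasibility inclusion of the CES equilibrium point in the CMES program for the lower bound, and the participation constraint plus nonnegative ESO profit (with the RUS transfers cancelling) for the upper bound. One small caution: the inequality $\overline{C}_i \geq J^{\mathrm{Ind}}_{\mathrm{BC},i}$ points the \emph{wrong} way for bounding the rejected BCs' contribution, so the IES-fallback convention you flag in your last paragraph is in fact necessary rather than optional --- the paper handles this implicitly by restricting the accounting to the accepted set $\mathcal{S}$.
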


\begin{proof}
	
	We structure our proof by the left-hand side (LHS) and the right-hand side (RHS), respectively.

	\emph{i)} We first prove the LHS, i.e., $\text{SC(CMES)} \leq  \text{SC(CES)}$. 
For the CES model, the \emph{social welfare} is characterized by the total cost: the  electricity bill plus the ES capital cost, i.e., 
%   Intuitively, the social cost of the proposed CES model is equal to  the BCs'  total cost  minus the ESO profit: 
	\begin{equation*}
	\begin{split}
	\text{SC(CES)} = \sum_{i \in \mathcal{ N}} \big\{ \overline{C}_i - ( c^{\text{+}}_g - c^{\text{-}}_g) r^{*}_i \big\} + C_{\sss E} E + C_{\sss P} P 
	\end{split}
	\end{equation*}
	where $r^{*}_i$ is the optimal RUS for  BC $i$  corresponding to the \emph{equilibrium} as discussed in Section IV-A. 
		
	For the CMES model \cite{yang2020optimal}, the \emph{social welfare} can be obtained by solving the following optimization problem: 
	\begin{subequations}
		\begin{alignat}{4}
		&	\text{SC(CMES)} = \min \sum_{i \in \mathcal{ N}} \big\{ \overline{C}_i - ( c^{\text{+}}_g - c^{\text{-}}_g) r_i \big\} + C_{\sss E} E + C_{\sss P} P  \notag\\
		&\label{eq:CMES}   \text{subject~to:}  ~~\eqref{eq:ChargingDischarging}, \eqref{eq:grid_trading}, \eqref{eq:ES_limits},   \eqref{eq:RUT_service}.  \tag{${\cal P}_{\sss \text{CMES}}$}\\
		& \quad \quad  \quad \quad  r^{\min}_i \leq r_i \leq r^{\max}_i, ~\forall i  \in \mathcal{ N}. \notag\\
		& \quad \quad  \quad \quad  \text{var:} ~~E, P, r_i,   y^{\omega}_i, ~\forall i \in \mathcal{N}, \omega \in \Omega. \notag
		\end{alignat}
	\end{subequations}

By comparing problem \eqref{eq:CMES}  and problem  \eqref{eq:ESO_reformulate}, we see the latter includes all the constraints of the former. Therefore,  all the feasible solutions of  problem \eqref{eq:ESO_reformulate} are  also feasible for problem  \eqref{eq:CMES}. 
Particularly, the optimal solution of problem   \eqref{eq:ESO_reformulate}, i.e., $r^{*}_i, \forall i$, must be  a feasible solution of  problem  \eqref{eq:CMES}.  Therefore, we have
\[ \text{SC(CMES)} \leq \text{SC(CES)}   \]

\emph{ii)} We prove the RHS, i.e., $ \text{SC(CES)} \leq \text{SC(IES)}$. 
Based on the problem definition, we know the BCs will join in the CES provided with the incentives  not less than IES model, otherwise they will choose to  install IES.   Therefore,  for  the proof, we only need to concentrate on the involved BCs in the CES denoted by  $\mathcal{S}: = \{ i \in \mathcal{ N} \vert s_i =1\}$.  
Intuitively, for the set of BCs $\mathcal{S}$,  we have the provided cost reduction by the ESO is  not less than the IES model, i.e.,
\begin{equation} \label{eq:rationality}
	\begin{split}
 - ( c^{\text{+}}_g - c^{\text{-}}_g) r^{*}_i  + q^{*}_i {r^{*}_i}^2  \leq& - ( c^{\text{+}}_g - c^{\text{-}}_g) \tilde{r}_i\\
&  +  C_{\sss E}\tilde{E}_i + C_{\sss P} \tilde{P}_i, ~\forall i \in \mathcal{S}. \\	\end{split}
\end{equation}
where $r^{*}_i$ and $q^{*}_i$ indicate the \emph{equilibrium}  of the CES model.  $\tilde{r}_i$, $\tilde{E}_i$ and $\tilde{P}_i$ denote the optimal RUS, ES energy and power capacity for the  IES model for BC $i$ with IES model.

Besides, we must have non-negative profit for the ESO, otherwise it would not  start the business, i.e.,
\begin{equation} \label{eq:profitable}
\begin{split}
C_{\sss E} E^{*} + C_{\sss P} P^{*} \leq \sum_{i \in \mathcal{S} } q^{*}_i ({r^{*}_i})^2 
\end{split}
\end{equation}

By combining \eqref{eq:rationality} and \eqref{eq:profitable}, we have
\begin{equation}
\begin{split}
& \sum_{i \in \mathcal{S}}   - ( c^{\text{+}}_g - c^{\text{-}}_g) r^{*}_i  + C_{\sss E} E^{*} + C_{\sss P} P^{*} \\
&\quad \quad  \leq \sum_{i \in \mathcal{S}} - ( c^{\text{+}}_g - c^{\text{-}}_g) \tilde{r}_i  +  C_{\sss E}\tilde{E}_i + C_{\sss P} \tilde{P}_i. 
\end{split}
\end{equation}

Equivalently, we have $ \text{SC(CES)} \leq \text{SC(IES)}$. 
\end{proof}           

\begin{remark}
\textbf{Theorem} \eqref{thm:social_cost} implies the CES model outperforms IES model in \emph{social welfare}.  In other words, the CES model can explore high economic benefits from  ES. However, the social performance of the CES model is upper bounded by the CMES model. This is reasonable  as the CMES model provides the social  optima through full cooperation. 
\end{remark}

\section{Numeric Results}          
In this section, we study the performance of the CES model via case studies.  
Particularly, we  empirically justify the rationality of the quadratic price model in Section V-B.  We study the ESO profitability and consumer incentives in Section V-C. Last but importantly, we study the \emph{social welfare} and ES efficiency in Section V-D. 

\subsection{Data and Parameters}                                                     
We set the case studies based  on real data for  building  demand \cite{BuildingDemand} and renewable generation (i.e., wind power and solar power) \cite{WindSolar}.  To account for the complementary features, we consider multiple types of buildings (i.e., office, hotel, school, hospital and restaurant) participating in the CES model. A typical demand curve and renewable generation (i.e., wind and solar power) profile  are shown in Fig. \ref{Fig: Demand_Renewable}(a) and  Fig. \ref{Fig: Demand_Renewable}(b), respectively. 
\begin{figure}[h]
	\centering
	\includegraphics[width=2.4 in, height= 1.4 in ]{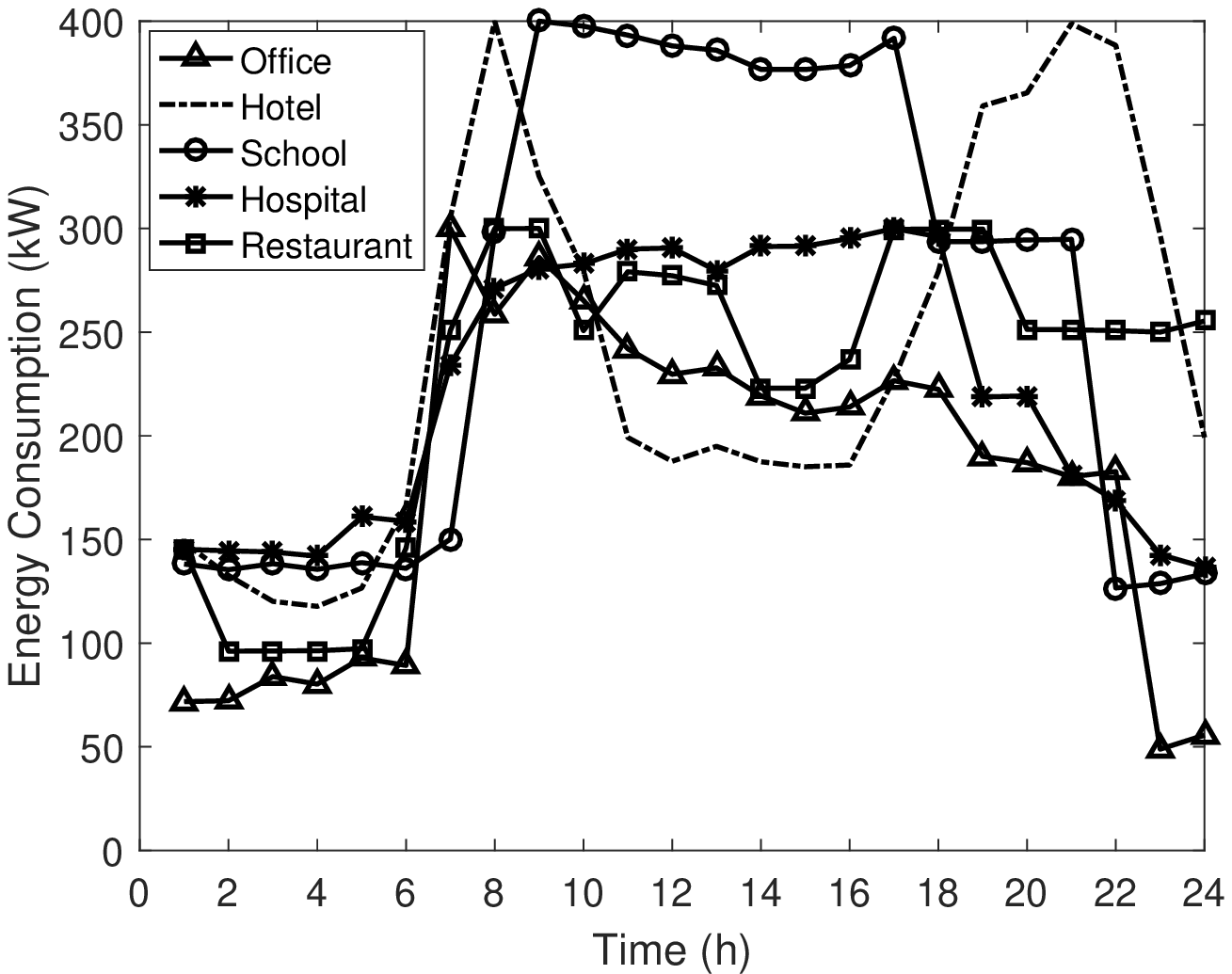}\\
	\includegraphics[width=2.4 in, height= 1.4 in ]{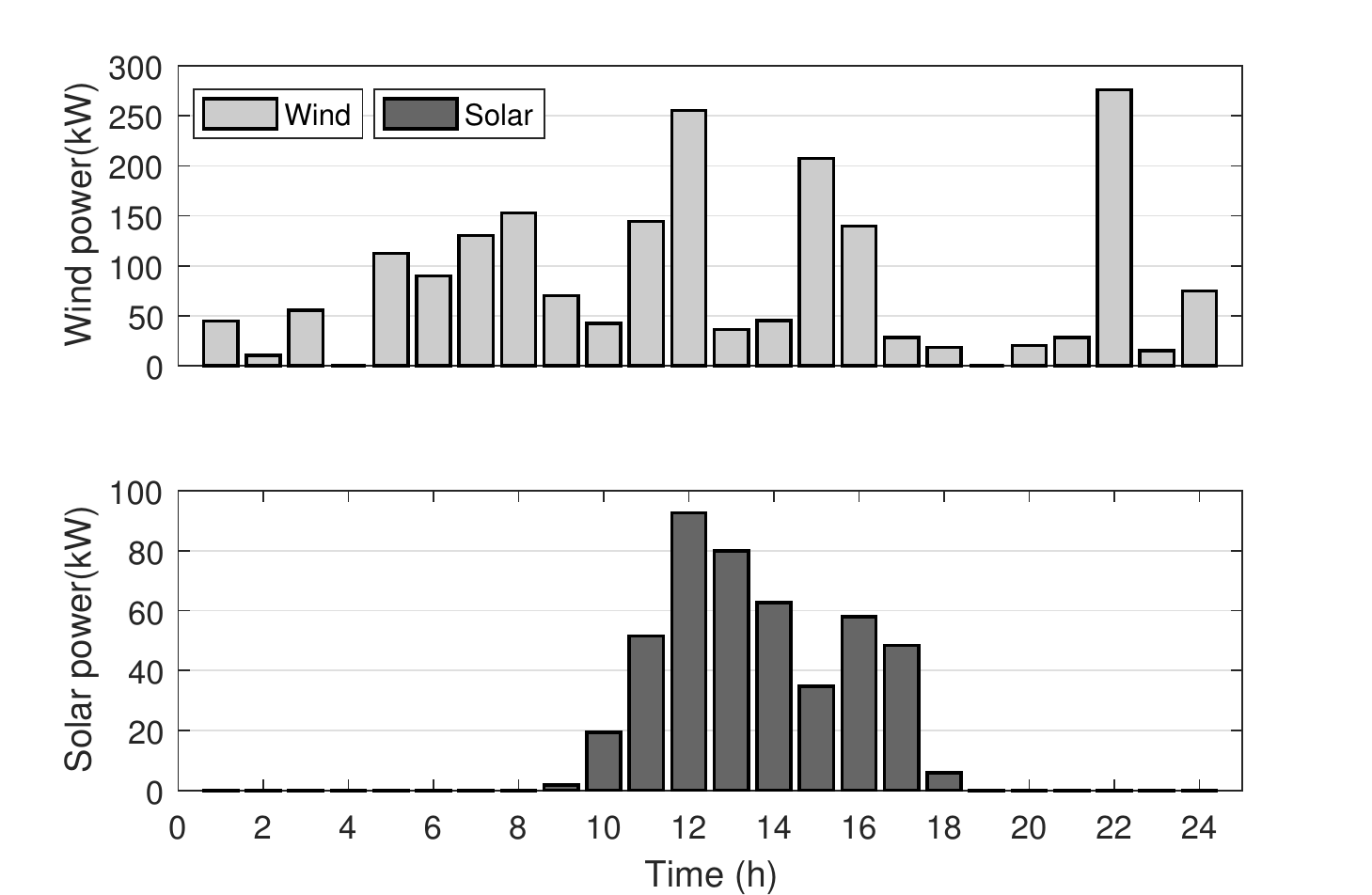}\\
	%	\caption{A scenario of wind and solar power.}\label{Fig: WindSolarPower}
	\caption{(a) A typical demand profile for different types of buildings (i.e., office, hotel, school, hospital, restaurant). (b) A typical wind and solar generation profile.}\label{Fig: Demand_Renewable}
\end{figure}

%\begin{figure} [h]
%	\centering    
%	\subfigure[] {
%		\label{Fig: BuildingDemand}     
%		\includegraphics[width=0.45 \columnwidth]{./figure/BasicDemand.eps}  
%	}     
%	\subfigure[] { 
%		\label{Fig: WindSolarPower}     
%		\includegraphics[width=0.45 \columnwidth]{./figure/WindSolar.eps}     
%	}     
%	\caption{ (a) Building demand curves. (b) Wind and solar power. }     
%\end{figure}

We set the  ES roundtrip efficiency as $\eta^{\text{ch}}, \eta^{\text{dis}} = 0.9$.  For the amortized ES capital price, we set the annual interest rate   $r= 0.06$ and the ES lifetime $L=10$ years. $\vert \Omega \vert =10$ representative scenarios are used to capture the volatility of renewable generation.  
 The maximum trading power  with the grid is set as  $p^{g, \max} = 1000$ {\small \si{\kilo\watt}} for each building. 
The charging and discharging rate limits for each BC are set as $p^{\text{ch}, \max}, p^{\text{dis}, \max} = 500 \si{\kilo\watt}$.  

%\emph{Parameters:} In each case study, we use $\vert \Omega \vert = 10$ typical scenarios to capture the volatility of  renewable generation. 
%Our numeric analysis capitalize on the projected ES  capital price is $C_{\sss E} =100$ Eur\si{\per{\kilo\watt}} and $C_{\sss P}=300$ Eur\si{\per{\kilo\watt}}  by the year 2025 \cite{pandvzic2018optimal}.
%In the amortized ES capital price model, we set the interest rate as $R = 6\%$ and the expected lifetime of the ES is $10$ years. 
%The roundtrip efficiency of the ES is  $\eta^{\text{ch}}, \eta^{\text{dis}} =0.9$. 

\subsection{RUS Price Model}

We first empirically  justify the quadratic price model, and then compare the RUS price with the projections of  IES model.

To justify the quadratic price model, we first study the  marginal ES capital investment w.r.t. the RUS for each BC with the IES model.  We consider $N=5$ BCs of different types and  set the grid price as $c^{\text{g+}} =0.3$s\$\si{\per{\kilo\watt\hour}} and $c^{\text{g-}}=0$.   With IES model,   the minimum ES capital cost  of BC $i$ to  achieve RUS  $r_i$ can be obtained by solving the following problem: 
\begin{subequations}
	\begin{alignat}{4}
	\label{eq:IES}{Q}^{\sss \text{IES}}_i & (r_i ) = \min C_{\sss E} E_i + C_{\sss P} P_i \tag{$\mathcal{P}_{\text{IES}}$}\\
	&\text{subject~to}:  \notag\\
	& \quad  0 \leq p^{\text{ch}, \omega}_{i, t}  \leq x^{\text{ch}}_{i, t}  p^{\text{ch}, \max},  \\
	&  \quad 0 \leq  p^{\text{dis}, \omega}_{i, t}  \leq x^{\text{dis}}_{i, t}  p^{\text{dis}, \max},  \\
	& \quad  e^{b, \omega}_{i, t+1}=e^{b, \omega}_{i, t}+p^{\text{ch}, \omega}_{i, t} \eta^{\text{ch}}- p^{\text{dis}, \omega}_{i, t} / \eta^{\text{dis}}, \\
	& \quad  0 \leq e^{b, \omega}_{i, t}  \leq E_i, \\
	& \quad  p^{\text{ch}, \omega}_{i, t}, p^{\text{dis}, \omega}_{i, t}  \leq P_i, \\
	&  \quad x^{\text{ch}}_{i, t} + x^{dis}_{i, t} \leq 1, x^{\text{ch}}_{i, t}, x^{\text{dis}}_{i, t} \in \{0, 1\},\\
	&  \quad p^{\text{ch}, \omega}_{i, t} \leq p^{r, \omega}_{i, t},\\%~\forall \omega \in \Omega, \forall i \in \mathcal{ N}, t \in \mathcal{ T}.  \\
	& \quad p^{ \text{g+}, \omega}_{i, t}- p^{\text{g-}, \omega}_{i, t} =p^{\text{ch}, \omega}_{i, t}- p^{\text{dis}, \omega}_{i, t} +p^{\text{d}, \omega}_{i, t}-p^{\text{r}, \omega}_{i, t},\\
	& \quad x^{\text{g+}, \omega}_{i, t} + x^{\text{g-}, \omega}_{i, t}  \leq 1,  ~x^{\text{g+}, \omega}_{i, t}, x^{\text{g-}, \omega}_{i, t} \in \{0, 1\}, \\
	& \quad p^{ \text{g+}, \omega}_{i, t} \leq x^{\text{g+}, \omega} _{i, t} p^{g, \max}, ~ p^{\text{g-}, \omega}_{i, t} \leq x^{\text{g-}, \omega}_{i, t} p^{g, \max}, \\
	& \quad \quad \quad ~\forall \omega \in \Omega, \forall i \in \mathcal{ N}, t \in \mathcal{ T}.  \notag\\ 
	&\quad  C_i(\bm{y}^{\omega}_i) \leq \overline{C}_i - ( c^{ \text{g+} }-c^{ \text{g-}} ) r_i, \\
	&\text{var:}~ E_i, P_i, y^{\omega}_i, \forall \omega \in \Omega. \notag
	\end{alignat}
\end{subequations}
where problem \eqref{eq:IES} adopts the notations of Section III  and the RUS $r_i$ can be regarded as an  input.

For each BC (BC1-BC5),  we can simulate  their  minimum ES capital cost w.r.t. the RUS $i$ by increasing $r_i$ from $r^{\min}_i=0$  to $r^{\max}_i$ with an incremental of $10$\si{\kilo\watt\hour}.  
We display the simulated  results  in Fig. \ref{fig:PriceModel} (dotted curves).  For each BC, we can find  a  quadratic function $\hat{Q}^{\sss \text{IES}}_i(r_i)=\hat{q}_i {r_i}^2$ that  well fits the simulated curve  (solid curves).   This implies for the BCs with IES installations, the marginal ES capital cost w.r.t. the RUS  can be approximately captured by a quadratic function.  
This somehow justifies the quadratic price model used by the ESO to charge the BCs for their requested RUS.
%This justifies the willingness of BCs to pay for their requested RUS with a quadratic price model. 

\begin{figure}[h]
	\centering
	\includegraphics[width=3.5 in]{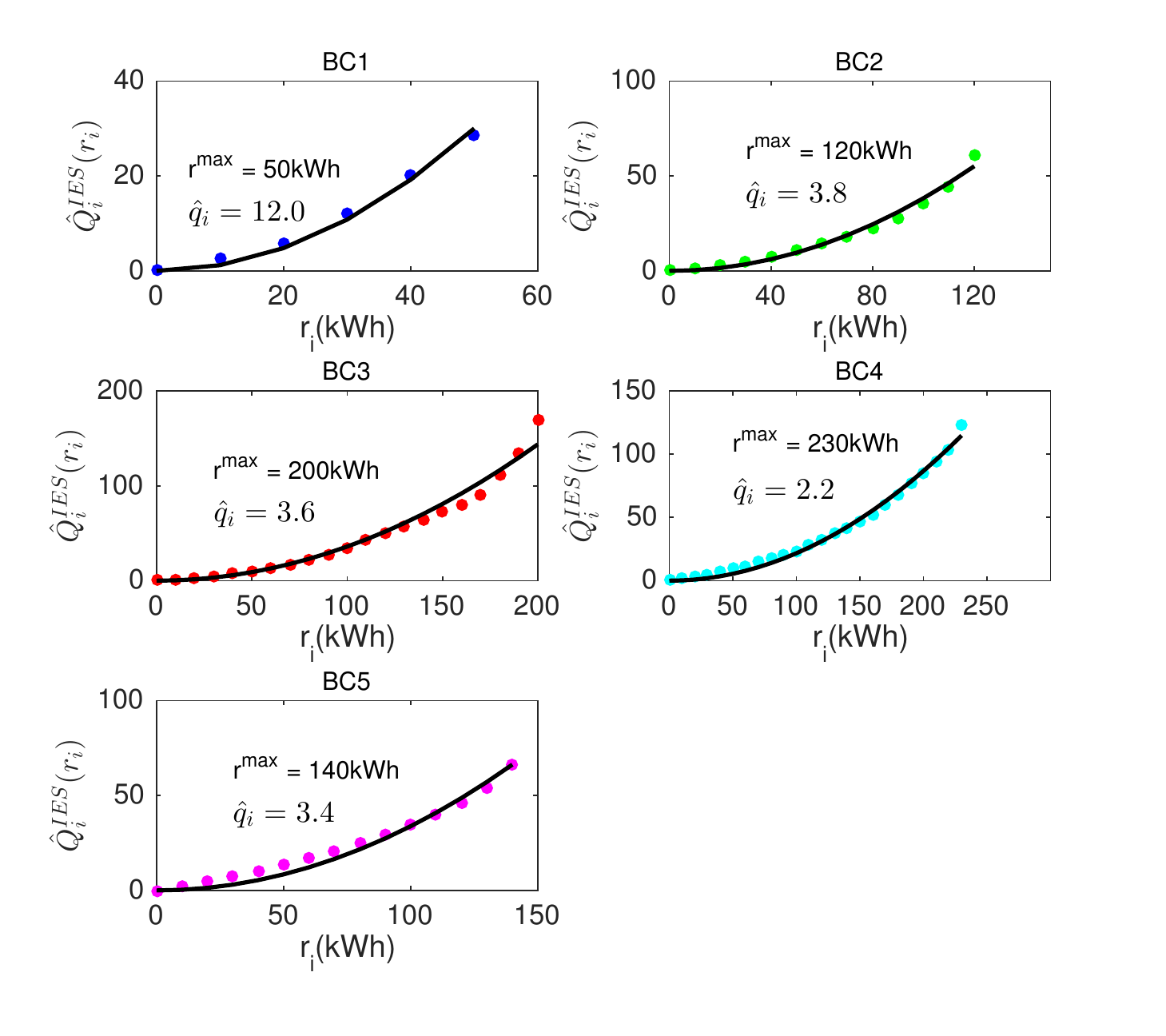}\\
	\vspace{-4mm}
	\caption{ES capital  cost w.r.t the RUS $r_i$ for BC1-BC5 with IES model (dotted curve: simulated results. solid curves: fitted results). }\label{fig:PriceModel}
\end{figure}

Subsequently, we compare the optimal RUS price of  CES model with the projections of IES model. 
For the  IES model,  the optimal RUS $\hat{r}_i$  and ES capital cost $\hat{Q}^{\sss \text{IES}}_i(\hat{r}_i)$ can be obtained from the simulations.  For comparison, we can obtain the projected RUS  price  $\hat{q}_i = \hat{Q}^{\sss \text{IES}}_i/{\hat{r}_i}^2$ for IES model. 
% minimizing the total cost of each BC. To make a comparison, we characterize the ES service price with the IES model with the price parameter $\hat{q}_i$ which  is calculated as  $\hat{q}_i = \hat{Q}^{\sss \text{IES}}_i/{\hat{r}_i}^2$.
For the CES model, the optimal RUS $r^{*}_i$ and price $q^{*}_i$ are  obtained by solving problem \eqref{eq:ESO_reformulate}. 
For the two ES models,  the optimal RUS and (projected) price   are compared in TABLE \ref{tab:ES_cost_model}. 
We conclude the CES model provides a much lower  RUS price to the BCs compared with IES model(i.e, $q^{*}_i < \hat{q}_i$) and  enables higher renewable utilization (i.e., $r^{*}_i > \hat{r}_i$).  
%This implies the CES model can enable higher renewable integration than the IES model. 

\begin{table}[h]
		\setlength\tabcolsep{3pt}
	\centering
	\caption{Optimal RUS and (projected) RUS price  with the IES and CES model.}
	\label{tab:ES_cost_model}
	\begin{tabular}{ccccc}
		\toprule % <-- Toprule here
		%& \multicolumn{2}{c}{IES} \\
		 ES model & \multicolumn{2}{c}{\text{IES} }  &  \multicolumn{2}{c}{\text{CES}}   \\
		\#BC       &   $\hat{r}_i$({\scriptsize \si{\kilo\watt\hour}})  & $\hat{q}_i$({\scriptsize $\times 10^3$ })    & $r^{*}_i$({\scriptsize \si{\kilo\watt\hour}})     & $q^{*}_i$({\scriptsize $\times 10^3$ })    \\               
		\hline
		BC1       & 14.95    &11.4  &  50.81    & 2.95    \\
		BC2       & 45.97   & 3.5   &  90.74    & 1.65     \\
		BC3      & 44.61   &  3.3   &  86.35    &  1.74       \\
		BC4      & 81.45   & 2.3   & 114.88   & 1.31        \\
		BC5     & 45.63   & 3.5    &  98.42    & 1.52        \\
		\bottomrule % <-- Bottomrule here
	\end{tabular}
\end{table}

\subsection{ESO Profitability and Consumer Incentives}
In this part, we study the ESO profitability  and consumer incentives of the CES model. Particularly, we compare the CES model with  \emph{i)} an existing third-party based ES sharing  model, referred to  VES model  \cite{zhao2017pricing, zhao2019virtual},   and \emph{ii)} IES model.
For  the VES model,   the ESO charges  the BCs by their required energy  capacity (no charging and discharging rate limits) with  a linear price model. The price is determined by the ESO to maximize its profit.    
We investigate a number of case studies under the different scales (i.e., $N \in \{  5, 10, 15, 20\}$ and grid price settings (i.e.,  $c^{\text{+}}_g = \{0.25, 0.3, 0.35\}$s\$\si{\per{\kilo\watt\hour}} and $c^{\text{-}}_g=0$ for all cases). 
 Considering the  problem complexity, we depend on simulations to obtain the \emph{equilibrium} price for the VES model. Specifically,  we simulate the ES rental price  $q$ from $0.05$s\$\si{\per{\kilo\watt\hour}}  to $0.5$ s\$\si{\per{\kilo\watt\hour}} with an incremental of  $0.002$  s\$\si{\per{\kilo\watt\hour}}. It's clear that the \emph{equilibrium} price corresponds to  the point  with maximal ESO profit if exists.   
For the CES model, the \emph{equilibrium } price can be obtained by solving problem \eqref{eq:ESO_reformulate} based on CPLEX toolbox embedded in MATLAB.
We compare the ESO profit with the VES and CES model  under the different grid  price  settings  in TABLE \ref{tab:ESO_profit}. 
Firstly, we note the ESO profit  is increasing w.r.t the scale and grid price both with the VES and CES model. 
 While the former is straightforward, the latter  can be understood that the BCs would require higher RUS under a higher grid price, thus creating higher profit for the ESO.  
Notably, the CES model can generate much higher ESO profit (i.e., about 2-4 times) than the VES model  in the case studies.

% Further, we investigate the optimal ES capacity (i.e., E, P), total ES capital cost (CapEx) for different cases as shown in TABLE \ref{tab:ESO_capacity}. To study the investment prospect of the ESO,  we quantify the  Return on Investment (ROI) rate as the average ESO revenue relative to the ES capital investment, i.e., 
% \[ \text{ROI}(x) = {\text{Revenue}(x)}/{\text{CapEx}(x)}\]
% where $x$ indicates the ES models, i.e., $x \in \{\text{VES}, \text{CES}\}$. 
% $\text{Revenue}(x) = \text{Profit}(x)+ \text{CapEx}(x)$ denotes the income of ESO which equals to the ESO  profit plus  ES capital investment. 

%\newcolumntype{g}{>{\columncolor{mygray}}c}
\begin{table}[h!]
	\setlength\tabcolsep{3pt}
	\begin{center}
		\caption{ESO profit  with VES and CES model.}
		\label{tab:ESO_profit}
		\begin{tabular}{ccccccc}
			\toprule % <-- Toprule here
			\multirow{2}{*}{N}      & \multicolumn{2}{c}{{ $c^{\text{+}}_g= 0.25$s\$\si{\per{\kilo\watt\hour}} }}   &\multicolumn{2}{c}{{ $c^{\text{+}}_g= 0.3$s\$\si{\per{\kilo\watt\hour}} }}   &\multicolumn{2}{c}{{ $c^{\text{+}}_g= 0.35$s\$\si{\per{\kilo\watt\hour}}}}  \\
			%&  \multicolumn{2}{c|}{ ES model}  &  \multicolumn{2}{c}{ ES model} \\
			&  VES    & CES            &   VES    & CES   &  VES    & CES   \\
			& (s\$)  &  (s\$)        & (s\$)   & (s\$)  & (s\$)  & (s\$) \\
			\hline
			5        & 3.22       & 23.83            &8.57           &  34.45        &   15.54            & 29.56\\
			10      & 25.74     & 73.83            & 47.96        &  100.09       &   67.54            & 102.83\\    
			15       &37.66     & 116.89          &70.41         &  157.17       &   91.79    & 179.34\\
			20    & 54.47    &191.23           &102.75       &  251.63       &  130.64     &  271.90 \\                   
			\bottomrule % <-- Bottomrule here
		\end{tabular}
	\end{center}
\end{table}

\begin{figure*}[h]
	\centering
	\includegraphics[width=6.4 in ]{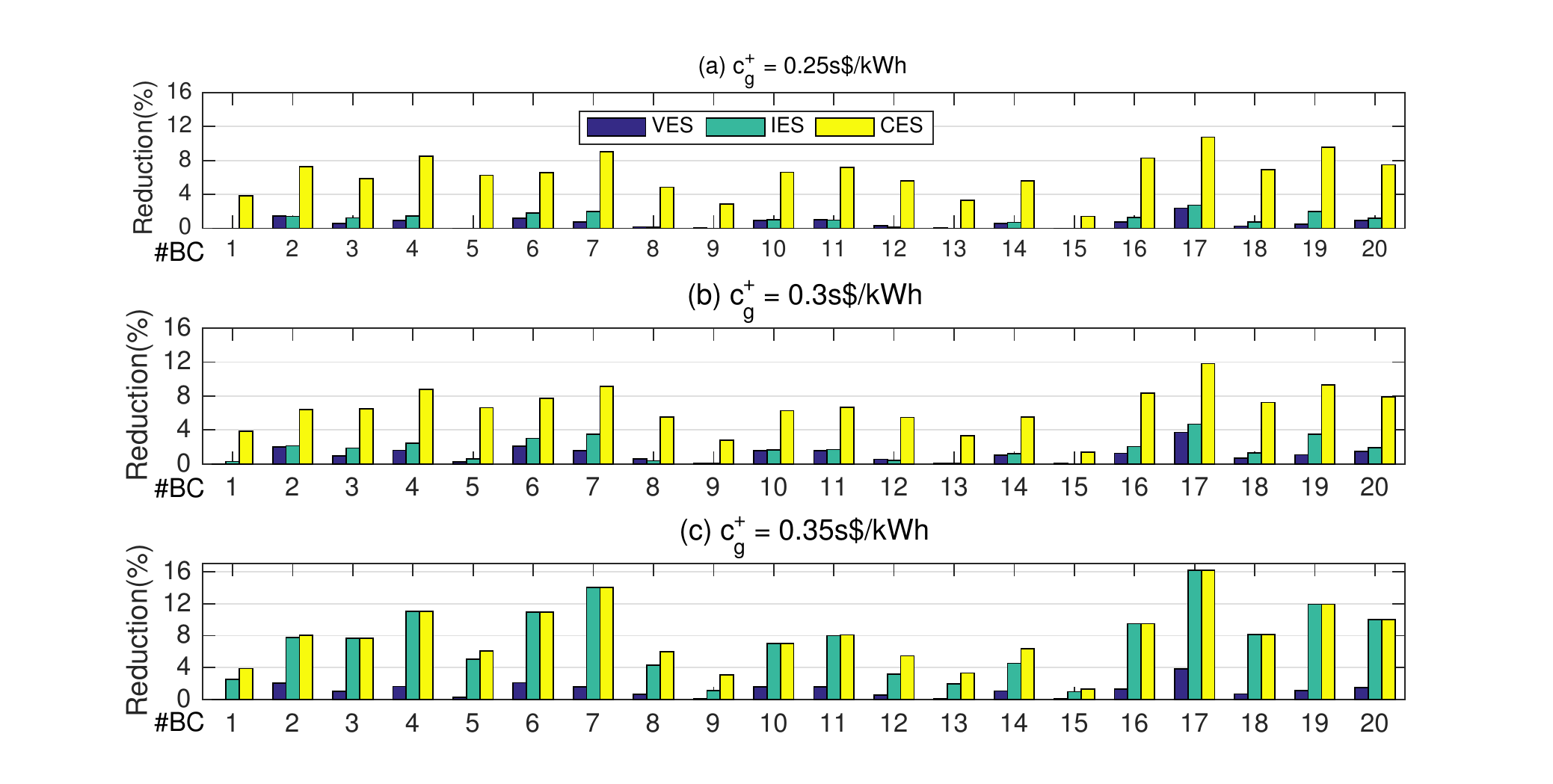}\\
	\caption{Cost reduction for BC1-BC20 with  different ES model (w/o ES model as \emph{benchmark}) under the grid price settings:  (a) $c^{\text{+}}_g = 0.25 s\$$\si{\per{\kilo\watt\hour}},  (b)  $c^{\text{+}}_g = 0.3 s\$$\si{\per{\kilo\watt\hour}}, (c) $c^{\text{+}}_g = 0.35 s\$$\si{\per{\kilo\watt\hour}}.}\label{fig:CostReductionN20}
\end{figure*}

 Further, we look into the consumer incentives with the different ES models.   For comparisons, we use each BC's electricity bill   without ES (w/o ES) as its \emph{baseline} and refer to  the cost reduction over the \emph{baseline} as consumer incentive. We compare the CES model with the IES and VES model.   For the IES model,  each BC installs individual ES to support local renewable utilization so as to minimize its total cost.  For display limits,  we use the case  with $N=20$  BCs as  an example and show the consumer incentives with the different ES models  in  Fig. \ref{fig:CostReductionN20} (we can observe the similar results for the other scales.   
 %We display the  consumer incentives (i.e., cost reduction) of  the BCs with the different ES models in  Fig. \ref{fig:CostReductionN20}. 
First, we see the VES model provides the least cost reduction to the BCs  for all grid  price settings.  Therefore, we can envision the BCs would prefer to  install individual ES  rather than participate in the VES model considering the  economic gains alone.  In contrast,  the  CES model yields the most cost reduction for each BC over the other ES models. 
Further, we can observe some interesting phenomenon from   the  low to high  grid price settings.
For the IES model, the cost reduction for each  BC increases w.r.t. the grid price. 
Particularly, for the high grid price, i.e.,  $c^{\text{+}}_g = 0.35s\$\si{\kilo\watt\hour}$, the IES model can provide comparable  cost reduction to  most BCs (i.e.,  BC2-BC4, BC6-BC7, BC10-BC11, BC16-BC20) compared with  the CES model.
%Specifically, for the  low and medium grid price, i.e.,  $c^{\text{+}}_g = 0.25s\$\si{\kilo\watt\hour}$ and $c^{\text{+}}_g = 0.3s\$\si{\kilo\watt\hour}$,  the CES model provides  higher cost reduction  to all BCs over the other two ES models. 
% For the high grid price, i.e.,  $c^{\text{+}}_g = 0.35s\$\si{\kilo\watt\hour}$, the CES model still ensure cost reduction not less than the other two ES models.  
%However,  for some BCs (i.e., BC1, BC5, BC8-BC9, BC12-BC15), the CES model provides apparently higher cost reduction over the IES model;  whereas for the other BCs (i.e., BC2-BC4, BC6-BC7, BC10-BC11, BC16-BC20),  the CES and IES model  provide almost the same cost reduction.   
This implies the economic gap between the IES and CES model will diminish with the increase of grid price. This is reasonable that the economic barriers of IES model cased by the ES capital cost will drop when the grid price increase. We can also interpret the results  from the opposite that the difference over the different ES models will vanish if the ES capital cost is brought down to some boundaries. 

%{\color{red}{revise}}
%This can be understood that with the increase of  grid price, it becomes more expensive for the BCs to purchase electricity from the grid to supply their demand.  Consequently,   the  marginal value of  individual ES {\footnote{the electricity bill reduction for per-unit ES capacity}} will  be raised.  The economic barrier for the BCs to install individual ES will be lowered. For the CES model,  to attract the BCs to still join the ES sharing market, the ESO would have to provide the equal cost reduction to  IES model.  

\subsection{Social welfare and ES efficiency}
We study the \emph{social welfare} and ES efficiency with the CES model under the different scale (i.e, $N =\{5, 10, 15, 20\}$) and grid price (i.e., $c^{\text{+}}_g = \{0.25, 0.3, 0.35\}$s\$\si{\per{\kilo\watt\hour}} and $c^{\text{-}}_g=0$) settings.
We compare the CES model with the  IES and VES model and   use 
 the w/o ES and CEMS model as  the \emph{references} of  lowest and highest \emph{social welfare}. 
 %Particularly, the CEMS model provides the social optima through full cooperation. 

For the  CES and VES  model, the \emph{social welfare} is characterized by  the total cost of all BCs minus the profit of ESO. For the IES and CMES model, the \emph{social welfare} equals to the total electricity bills of all BCs plus the ES capital cost. For the w/o ES model, the \emph{social welfare} defines the total electricity bills of all BCs as there is no ES installed. 
%\[ \text{SC}(\text{CES}) = \sum_{i \in \mathcal{ N}}  J^{*}_{\sss \text{BC}~i} - J^{*}_{\sss \text{ESO} } \]

\begin{table}[h]
	\centering
	\caption{\emph{Social welfare} of different ES models.}
	 \label{tab:SocialCost}
	 \begin{tabular}{c}
	 	(a) grid price $c^{\text{+}}_g = 0.25 s\$$\si{\per{\kilo\watt\hour}}
	 \end{tabular}
	\begin{tabular}{cccccc}
		\toprule % <-- Toprule here
       \multirow{2}{*}{N}  &  \multicolumn{5}{c}{ES models} \\
       \cline{2-6}
                                        &    \text{w/o ES}  & \text{IES}  &   \text{VES} &   \text{CES} & \text{CMES} \\  
                                        & {\scriptsize ($\times 10^2$s\$)} &  {\scriptsize ($\times 10^2$s\$)} & {\scriptsize ($\times 10^2$s\$)} & {\scriptsize ($\times 10^2$s\$)}  & {\scriptsize ($\times 10^2$s\$)}  \\
        \hline
         5          &  10.73        &10.65        & 10.70        &10.02          & 9.90    \\
         10        &  23.62         &23.42       & 23.22        & 21.67         & 21.40   \\
        15         &  35.60         & 35.37      & 35.05        & 32.49         & 32.30   \\
        20         &  47.19          &46.77      & 46.36         &42.31         &42.08\\
		\bottomrule % <-- Bottomrule here
	\end{tabular}
~\\
~\\
~\\
	 \begin{tabular}{c}
	(b) grid price $c^{\text{+}}_g = 0.3 s\$$\si{\per{\kilo\watt\hour}}
\end{tabular}
	\begin{tabular}{cccccc}
	\toprule % <-- Toprule here
	\multirow{2}{*}{N}  &  \multicolumn{5}{c}{ES models} \\
	\cline{2-6}
	&    \text{w/o ES}  & \text{IES}  &   \text{VES} &   \text{CES} & \text{CMES} \\  
	& {\scriptsize ($\times 10^2$s\$)} &  {\scriptsize ($\times 10^2$s\$)} & {\scriptsize ($\times 10^2$s\$)} & {\scriptsize ($\times 10^2$s\$)}  & {\scriptsize ($\times 10^2$s\$)}  \\
	\hline
	5          &  12.88        & 12.70      &12.68          & 11.88    &11.78   \\
	10        & 28.34         & 27.92      &27.57          & 25.66    & 25.48   \\
	15        & 42.72         & 42.21      &41.64          & 38.66    & 38.53   \\
	20        & 56.63         & 55.74      &  54.99        &  50.46   & 50.15\\
	\bottomrule % <-- Bottomrule here
\end{tabular}
~\\
~\\
~\\	
	 \begin{tabular}{c}
	(b) grid price $c^{\text{+}}_g = 0.35 s\$$\si{\per{\kilo\watt\hour}}
\end{tabular}
\begin{tabular}{cccccc}
	\toprule % <-- Toprule here
	\multirow{2}{*}{N}  &  \multicolumn{5}{c}{ES models} \\
	\cline{2-6}
	&    \text{w/o ES}  & \text{IES}  &   \text{VES} &   \text{CES} & \text{CMES} \\  
	& {\scriptsize ($\times 10^2$s\$)} &  {\scriptsize ($\times 10^2$s\$)} & {\scriptsize ($\times 10^2$s\$)} & {\scriptsize ($\times 10^2$s\$)}  & {\scriptsize ($\times 10^2$s\$)}  \\
	\hline
	5          & 15.03        &14.71      & 14.73       & 13.64   &13.63    \\
	10        &33.06         & 32.34     &  31.93     & 29.54   & 29.54   \\
	15        & 49.84        & 48.95      & 48.32     & 44.76   & 44.73   \\
	20        & 66.07        & 64.56      &  64.03    & 58.23   & 58.19\\
	\bottomrule % <-- Bottomrule here
\end{tabular}
\end{table}

We report the \emph{social welfare}  of different cases  in TABLE \ref{tab:SocialCost}. 
First of all,  the  w/o ES and CMES model provide the  lowest and highest \emph{social welfare} for each case as anticipated.  Obviously, the CES model outperforms the IES and VES model in \emph{social welfare} as the former provides lower social cost over the other two for each case.  Besides, we see the VES and IES model are comparable in social cost  for each case. 
% \emph{social cost} as  surplus renewable generation is used to supply the demand due to the lack of ES. By contrast,  the CMES model yield the lowest \emph{social cost}  as the BCs can fully cooperate with each other regarding the ES investing and operation. 
%For the other ES models (i.e., IES, VES, CES models), their  \emph{social cost}  lie in-between.

Further, to quantify  the social performance gaps of different ES models,  we define a relative social cost (RSC) metric that captures the normalized  social performance of different ES models over the CMES and w/o ES model, i.e., 
\[ \text{RSC}(x) = \frac{\text{SC}(x)-\text{SC}(\text{CMES})}{\text{SC}(\text{w/o ES}) - \text{SC}(\text{CMES})}\]
where  $x$ denotes an ES model, i.e., $x \in \{\text{IES},  \text{VES}, \text{CES}\}$.  
Intuitively, we have  $\text{RSC}(\text{CMES}) = 0 $ (social optima) and $\text{RSC}(\text{w/o ES}) = 1$. 
Besides, we note  the smaller the RSC, the higher of \emph{social welfare}.
For comparisons, we visualize the RSC metric of different ES models under the different scale  (i.e., $N \in \{  5, 10, 15, 20\}$)  and grid price (i.e., $c^{\text{+}}_g = \{0.25, 0.3, 0.35\}$s\$\si{\per{\kilo\watt\hour}})  settings 
in Fig. \ref{fig:relative social cost}. 
Firstly, we see the RSC of all  ES models (i.e., IES, VES, CES) are bounded by the w/o ES  and CMES model (i.e., $\text{RSC}(\text{IES}), \text{RSC}(\text{VES}), \text{RSC}(\text{CES}) \in [0, 1]$).
Besides, the IES and VES only achieve minor social superiority over the lower  baseline ( i.e., RSC(w/o ES) =0)  for all  cases.  
Nevertheless, the CES model dominates  the IES and VES model for all case in \emph{social welfare}  (i.e., $\text{RSC}(\text{CES}) \ll \text{RSC}(\text{IES}), \text{RSC}(\text{VES}$) ).
More notably,  we see the  CES model achieves the near \emph{social optima} (i.e., $\text{RSC}(\text{CES}) \approx 1$).

\begin{figure}[h]
	\centering
	\includegraphics[width=3.2 in]{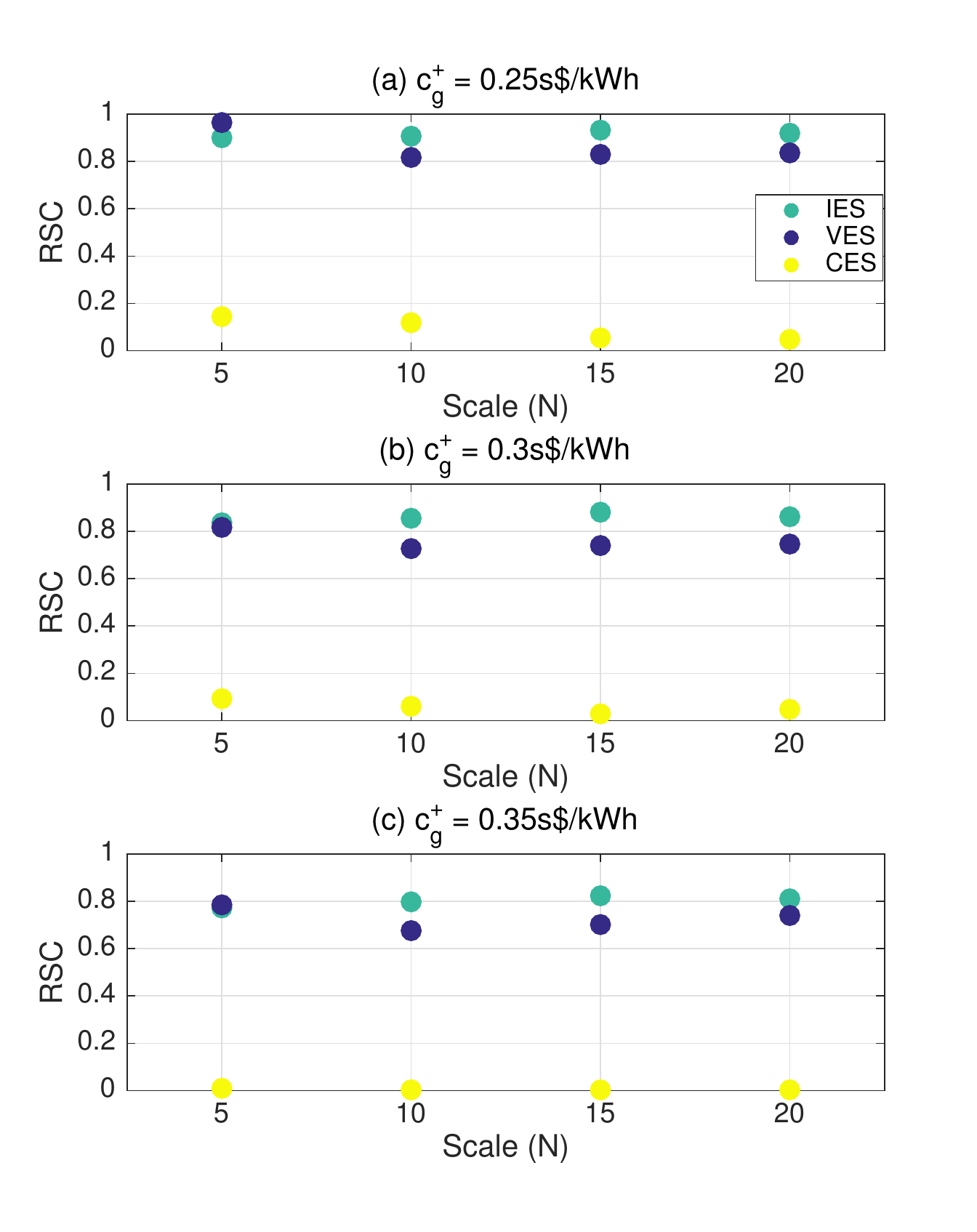}\\
	\caption{Relative social cost (RSC) of different models (w/o ES and CMES as \emph{benchmarks}: RSC(w/o ES) = 0 and RSC(CMES) = 1) under the grid price settings (a) $c^{\text{+}}_g = 0.25 s\$$\si{\per{\kilo\watt\hour}},  (b) $c^{\text{+}}_g = 0.3 s\$$\si{\per{\kilo\watt\hour}}. (c)  $c^{\text{+}}_g = 0.35 s\$$\si{\per{\kilo\watt\hour}}.}\label{fig:relative social cost}
\end{figure}

The high \emph{social welfare} generally corresponds to  high resource efficiency.  To further demonstrate it,  we study the ES efficiency with the CES model by comparing with the VES model. Specifically, we study the ES utilization   indicated by the State-of-Charge (SoC) and charging/discharging rates  over the contracted time period.  
We use the case with $N=10$ BCs and grid price $c^{\text{+}}_g = 0.35 s\$$\si{\per{\kilo\watt\hour}} for a randomly picked scenario as an example (similar results are observe in  the other cases).  
The  charging/discharging rates (\si{\kilo\watt}) and the SoC (\%)  for the BCs (BC1-BC10) with the two ES  models are contrasted  in Fig. \ref{fig:ChargingDischarging} and Fig. \ref{fig:SoCState}, respectively.
Notably, we see  CES model exhibits  much higher ES utilization than the VES model both for  energy capacity (i.e., SoC) and power capacity (i.e., charging and discharging rates).  Moreover, we can observe some interesting phenomenon regarding the ES operation patterns over the two ES models.  Specifically, from Fig. \ref{fig:ChargingDischarging},   we observe the BCs mostly charge or discharge synchronously and do not see much cancellations in sharing with the VES model as expected in \cite{zhao2017pricing, zhao2019virtual}.  This can be understood that though the ESO expects the BCs to charge and discharge in a complementary manner,  the linear price model based on energy capacity does not provide the coordination signals.   
In other words, the BCs  operate their contracted energy capacity separately, resulting in the low utilization of  ES resources, especially the power capacity.  Actually, this leads to  the short profitability  of  ESO as discussed in Section V-C.  
Nevertheless, we can observe quite different results for the CES model. At each instant, we can see both charging and discharging  from  different BCs, resulting in frequent cancellation and high ES utilization both for the energy and power capacity.  This  demonstrates the favorable performance of the CES model in shaping ES efficiency. 

 \begin{figure}[h]
	\centering
	\includegraphics[width=3.2 in]{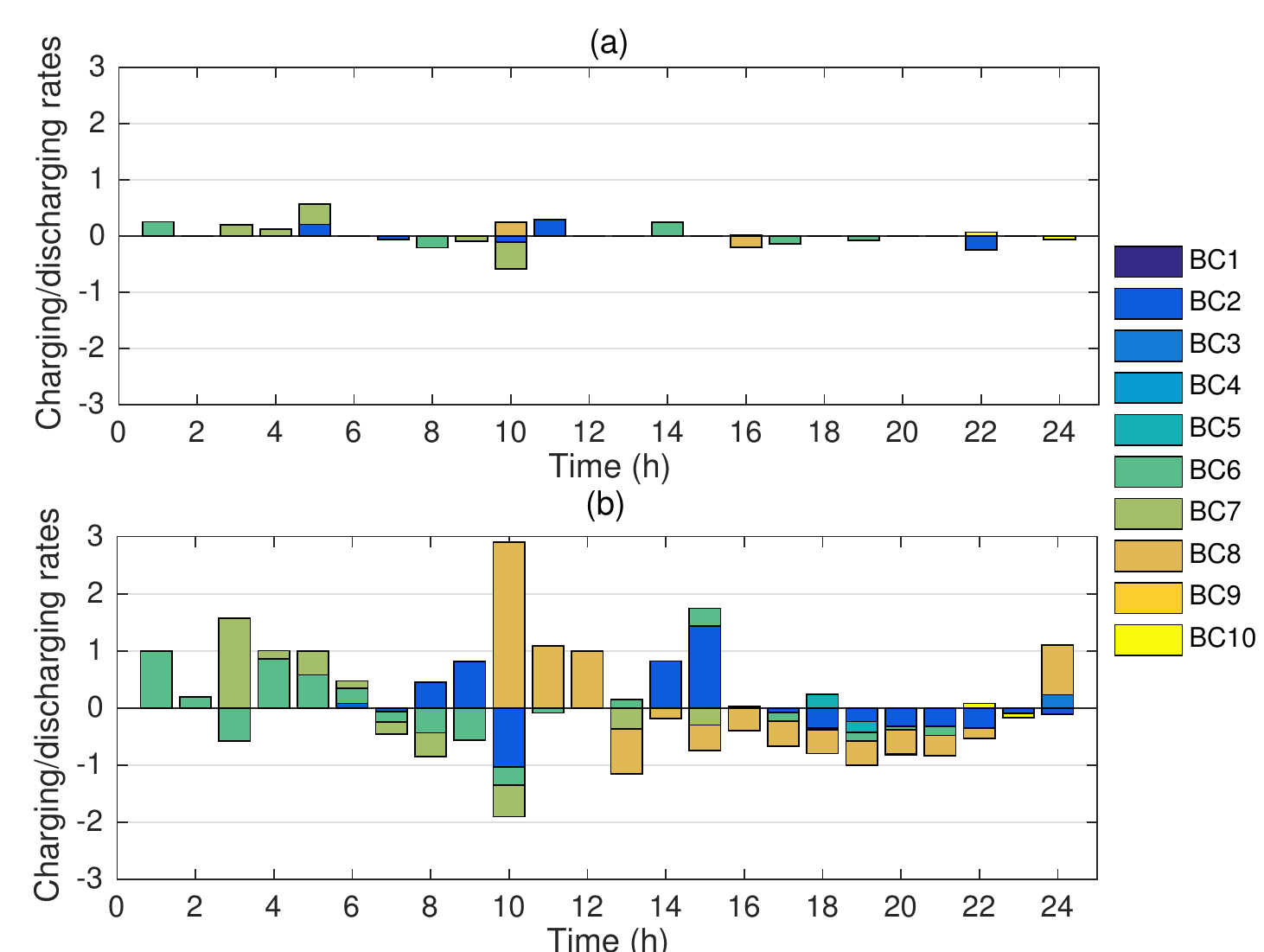}\\
	\caption{Charging/discharging rates of BC1-BC10 with (a) VES model,  and  (b) CES model.}\label{fig:ChargingDischarging}
\end{figure}

 \begin{figure}[h]
	\centering
	\includegraphics[width=3.2 in]{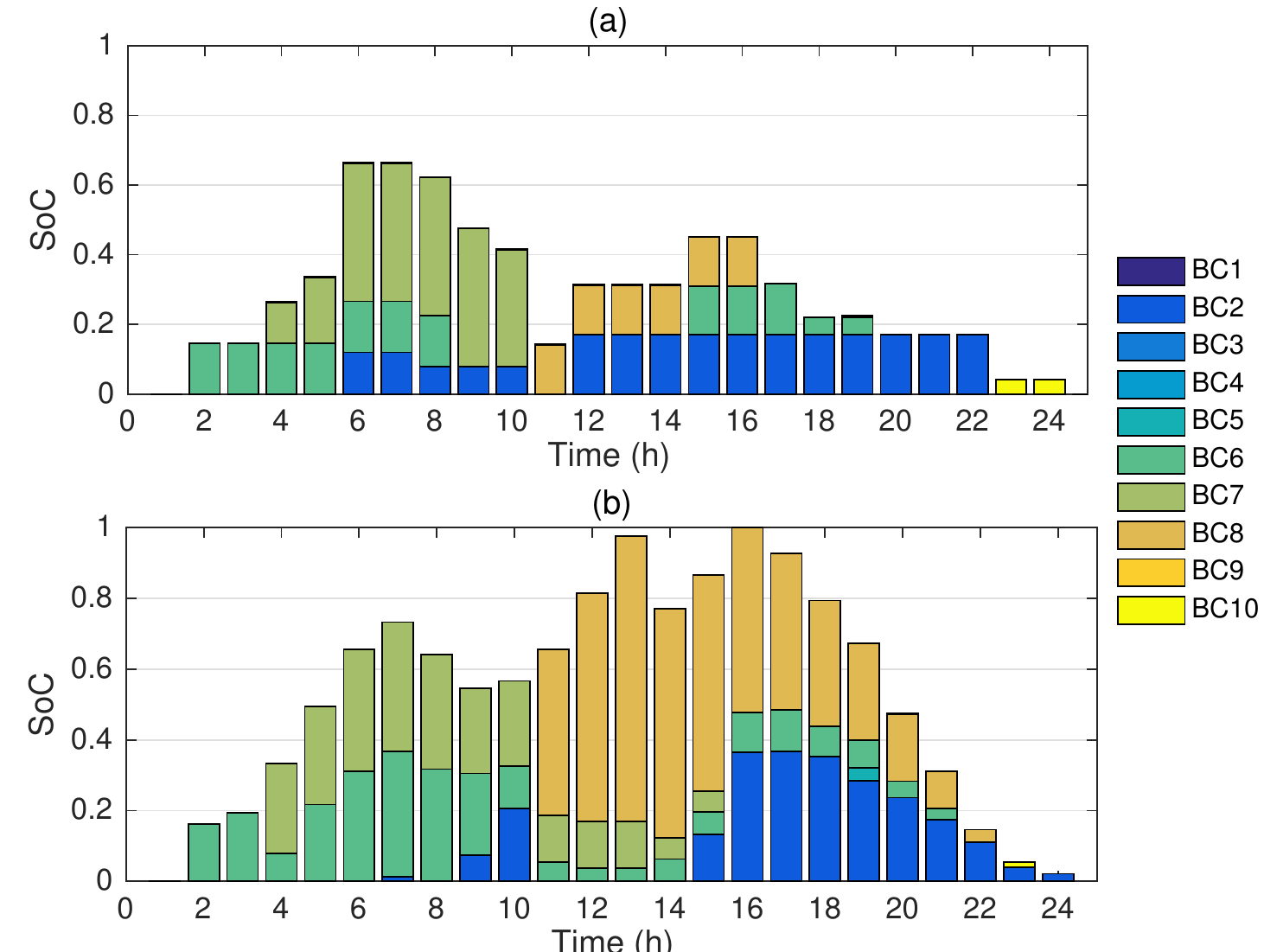}\\
	\caption{State-of-charge (SoC) of  BC1-BC10 with (a) VES model,  and (b) CES model.}\label{fig:SoCState}
\end{figure}

\section{Conclusion}
This paper proposed a cloud energy storage (CES) model for enabling distributed renewable integration of building consumers (BCs). Different from most existing ES sharing models that the entity gains profit by leasing energy or power capacity, our CES model is deployed by allowing the energy storage operator (ESO) to sell renewable utilization service (RUS) to its consumers. To capture the increasing marginal ``effort"  (i.e., ES resources) made by the ESO  to achieve the requested RUS, we adopted a quadratic price model which was  empirically justified through simulations.  We formulated the interactions between the ESO and the BCs as a Stackelberg game and demonstrated the existence of an \emph{equilibrium}. Besides, we showed the CES model can achieve better \emph{social welfare} than individual ES (IES) model.  The superior performance of the CES model over the IES model and an existing ES sharing model (referred to VES model) are also demonstrated via case studies. Specifically, we showed the CES model can provide 2-4 times ESO  profit over the VES model. Meanwhile, higher cost reduction  are provided to the BCs.  Last and noteworthy, we showed the CES model can achieve near \emph{social optima} and high ES efficiency (i.e., utilization) which are not provided by the other ES  sharing models.  Conclusively, this paper can work as an example to show how market design and sharing economy can shape the efficiency of energy systems.

\ifCLASSOPTIONcaptionsoff
  \newpage
\fi

\bibliographystyle{ieeetr}
\bibliography{references}

%\begin{IEEEbiography}{Michael Shell}
%Biography text here.
%\end{IEEEbiography}
%
%% if you will not have a photo at all:
%\begin{IEEEbiographynophoto}{John Doe}
%Biography text here.
%\end{IEEEbiographynophoto}
%
%% insert where needed to balance the two columns on the last page with
%% biographies
%%\newpage
%
%\begin{IEEEbiographynophoto}{Jane Doe}
%Biography text here.
%\end{IEEEbiographynophoto}

\end{document}